\newtheorem{definition}{Definition}%
\newtheorem{example}{Example}%
\newtheorem{theorem}{Theorem}%
\newtheorem{fact}{Fact}
\newtheorem{remark}{Remark}
\newtheorem{corollary}{Corollary}%
\newtheorem{construction}{Construction}%
\newtheorem{proposition}{Proposition}%
\newenvironment{proof}{\noindent {\bf Proof:}}
{\hfill$\square$\vskip .2cm}%
\journal{Blockchain: Research and Application}
\def\C{{\cal C}}
\def\E{{\cal E}}
\def\X{{\cal X}}
\def\Y{{\cal Y}}
\def\implies{\Rightarrow}
\DeclareRobustCommand{\bitcoin}{{%
		\normalfont\sffamily
		\raisebox{-.05ex}{\makebox[.1\width][l]{-\kern-.2em-}}B%
}}
\newcounter{ncomm}%
\newcounter{pcomm}%
\newcommand{\cnt}[1]{\Gamma\!\!}
\newcommand{\cntxt}[1]{\begin{tiny}\Gamma\end{tiny}}
\begin{document}

\begin{frontmatter}

\title{A formal model for ledger management systems based on contracts and temporal logic}

\author{Paolo Bottoni, Anna Labella}
\address{Department of Computer Science - Sapienza University of Rome - Italy}
\author{Remo Pareschi}
\address{Stake Lab - University of Molise - Italy}

\begin{abstract}
A key component of blockchain technology is the ledger, viz., a database that, unlike standard databases, keeps in memory the complete history of past transactions as in a notarial archive for the benefit of any future test. 
In second-generation blockchains such as Ethereum the ledger is coupled with smart contracts, which enable the automation of transactions associated with agreements between the parties of a financial or commercial nature. 
The coupling of smart contracts and ledgers provides the technological background for very innovative application areas, such as Decentralized Autonomous Organizations (DAOs), Initial Coin Offerings (ICOs) and Decentralized Finance (DeFi), which propelled blockchains beyond cryptocurrencies that were the only focus of first generation blockchains such as the Bitcoin. 
However, the currently used implementation of smart contracts as arbitrary programming constructs has made them susceptible to dangerous bugs that can be exploited maliciously and has moved their semantics away from that of legal contracts.
We propose here to recompose the split and recover the reliability of databases by formalizing a notion of contract modelled as a finite-state automaton with well-defined computational characteristics derived from an encoding in terms of allocations of resources to actors, as an alternative to the approach based on programming.
To complete the work, we use temporal logic as the basis for an abstract query language that is effectively suited to the historical nature of the information kept in the ledger.
\end{abstract}

\begin{keyword}
Ledger \sep Contract \sep Database \sep Transaction \sep Automata theory \sep Temporal logic.
\end{keyword}

\end{frontmatter}


\section{Introduction}\label{sec:intro}
Blockchains and distributed ledgers are nowadays among the information technologies having the greatest impact. 
Initial success has come to public blockchains, starting with the mother of all of them, that of Bitcoin, which has cleared the large-scale practicability of cryptocurrencies, i.e., currencies freed from central issuing authorities and governed by the community of their users. 
Public blockchains of the next generation, such as Ethereum, have combined the community-based management of cryptocurrencies with the experimentation and practice of new forms of organization and finance, such as decentralized autonomous organizations (DAOs), initial coin offerings (ICOs) and decentralized finance (DeFi). 
Finally, private distributed ledgers, of which the best known and most practiced are the platforms developed within the Hyperledger open-source project, have released the potential for  organizational innovation of these technologies from the support of cryptocurrencies, thus making it possible to involve brick-and-mortar companies in innovative business ecosystems.

At the base of all this there is a fundamental technological component, referred to here as the ``digital ledger", which is nothing more than a database management system (DBMS) that preserves the history of all its records  by appending new versions of records to previous ones and linking them together through identifiers such as hash pointers. 
Several DBMSs that predate the coming of age of blockchains work that way, like the popular Hadoop file system (HDFS) largely adopted for big data management. 
Distributed ledgers are a special case of digital ledger where the validation of transactions takes place through a consensus mechanism over a network of peer nodes. 
Public blockchains, in turn, specialise distributed ledgers by grouping transactions on records into blocks built by nodes rewarded for the construction with the native currency of the blockchain. 
Private distributed ledgers are yet another specialization, as are public distributed ledgers based on data arrangements alternative to blocks, such as the IOTA platform, aimed at supporting Internet-of-Things applications, that uses instead directed acyclic graphs.

However, there is more, in the latest ledgers that have entered the arena in the wake of the blockchain boom, than just decentralized validation of transactions through distributed consensus protocols. 
There are, in fact, smart contracts: conceptually introduced by the multifaceted scholar Nick Szabo 
(jurist, cryptographer, computer scientist) at the end of the 1990s for computer-based automated execution of legal contracts of a commercial nature~\cite{Sza97}, the idea stayed dormant until it was dusted off from oblivion by Vitalik Buterin, the founder and creator of Ethereum, in the mid-10s of this century, who transferred it to the eponymous blockchain by equipping it for implementation with Solidity, a contract-oriented  programming language~\cite{But13}. 
Ethereum's move was soon repeated on 
%
%
other platforms, both public and private, and there is now ample opportunity 
%
for 
smart contracts in blockchain and distributed ledger projects. 

The reason why smart contracts lend themselves particularly well to implementation on a ``distributed ledger" is actually more on the ``ledger" than on the ``distributed" part of the term. 
In fact, the permanence of data on the ledger makes the execution of smart contracts traceable, which of course is an advantage for auditing purposes whenever need may arise, as in a due diligence or a litigation. 
On the other hand, also distributed consensus can support traceability, albeit less essentially, by multi-checking the execution of the agreements made through many validators, which strengthens the claim to veracity of the transcribed data in the absence, by choice or necessity, of a reliable central authority. 
In any case, what really matters is operating  in a context predisposed to auditing, such as a digital ledger by its nature is.
%

However, it is precisely the choices made for the implementation of smart contracts that have distanced the platforms that have adopted them both from the database characteristics of first generation distributed ledgers like the Bitcoin and from smart contracts as originally conceived. In fact, although referred to as a contract-oriented language, Ethereum's  Solidity is actually a scripting language that can write on the blockchain. Similarly, in other environments, such as Hyperledger, 
%
%
smart contracts are programs written in conventional languages that can access the ledger. 
These programming tools have certainly been useful in propelling the application of distributed ledgers beyond cryptocurrencies, but have also created reliability problems by making possible constructs of dubious semantics and arbitrary complexity; witness among all the notorious DAO exploit of 2016 in which a still unknown hacker exploited a bug in a Solidity smart contract to steal \$ 50 million from the Ethereum common pot. 
Moreover, they have weakened the relationship between smart contracts and legal contracts as originally conceived by Szabo and taken up in intent, but not in practice, by Buterin; indeed, there is very little in common between the text of a commercial contract and a program in Solidity.

The purpose of this article is to define a general ledger model with contracts seen as declarative constructs on a database, rather than as arbitrary procedural programs, with well-defined operational characteristics in the tradition of advanced transaction approaches such as the renowned ACID model. 
To this end, we propose a bare-bones view of contracts based on the notion of \emph{allocation} of resources to actors, so that a contract defines admissible evolutions of states of affairs defined by sets of such allocations. 

At the same time, our model aims to make legal contracts in textual form effectively transferable and translatable into executable contracts, thus recovering the original idea of smart contract as introduced by Szabo. 
Hence, we propose two forms of contract-related automata: the first identifies the possible states of the contract with respect to the completion of some sets of obligations, where the discharge of an obligation is modeled as the transfering of some resource from some actor to another one; 
%
%
the second provides a refined view of the first, taking into consideration the possible orderings in which obligations
can be discharged, thus turning a ledger into a faithful record of the sequence of transfers actually occurred during the execution of the contract.

On our way towards a model endowed with full-fledged DBMS characteristics, we will also deal with querying, by providing constructs 
%
%
to verify progress in a contract execution, a functionality that is badly needed in practice, as well as to query the ledger by scrolling it back and forth in time, with advanced query modes ranging beyond the still rather limited solutions currently available. 
The formal cornerstones of our model are the theory of automata and temporal logic, formalisms that have been both rigorously systemized and theorized. 
As for distribution and centralization, we will be agnostic about the issue, since our model can be adapted to either one of the two options. 
For the organization of the data, we will maintain the minimal requirement that they are organized as in a ledger. 
This does not preclude more specialized organizations, like blocks, the model being easy to extend and adapt  to ledgers structured as blockchains. 

The rest of this article is structured as follows: Section~\ref{sec:background} introduces some background notations and definitions.
Section~\ref{sec:contracts} provides the fundamental notions of \emph{resource}, \emph{actor} and \emph{transfer}, so that a contract can be seen as defining a set of constraints on sequences of transfers of resources between actors. 
Section~\ref{sec:automaton} shows how a legal contract understood as a set of interdependent obligations can be modeled 
%
%
in computational terms by defining admissible paths on some contract-related automaton, where transitions are associated with the discharge, through transfers of resources, of obligations. 
Section~\ref{sec:encoding} then shows how contract automata can be integrated with a ledger so that their actions are transcribed as records into the ledger, while Section~\ref{sec:categoricalModel} builds, from the algebraic structure of ledgers and contracts, modal/temporal logics used in Section~\ref{sec:language} to define an abstract query language that can be applied to extract information about both static (the records in the ledger) and dynamic 
(the states of contracts whose execution is in progress) aspects of a contract execution. 
Finally, Section~\ref{sec:related} discusses related work and Section~\ref{sec:concls} concludes the article.

\section{Background}\label{sec:background}
We recall some basic notions and notations, useful in the rest of the paper.

For $n\in\mathbb{N}$, the set of all integers from $1$ to $n$ is denoted by $[n]$.
An \emph{alphabet} is a finite set $A=\{a_1,\dots,a_k\}$, where each element $a_i$ is called a \emph{member} of $A$.
For $k\in\mathbb{N}$, a sequence $\omega=\langle{x}_1\cdots{x}_k\rangle$ of elements from $A$ is called a \emph{word} on $A$ of \emph{length} $k$. 
We use the notation $\omega[i]$ to indicate the element $x_i$ for $i\in[k]$.
We denote the length of a sequence $\omega$ by $|\omega|$ and the unique word of length $0$ by $\epsilon$. 
Then, the set of all words on $A$ 
%
%
(for each possible length)
%
%
is denoted by $A^*$.

Given a partially ordered set $(X,\le)$, if $(x,y)\in\le$, then $x$ is called a \emph{prefix} of $y$ and $y$ is called a \emph{prolongation} of $x$. 
Vice versa, given a prefix relation on a set $X$, its transitive closure defines a strict partial order.
%
%

It is easy to see that the above definition translates to the standard notion of prefix for the case of words. 
%
%
Indeed, given a word $\omega=\langle{x}_1\cdots{x}_k\rangle$, for each $l\le{k}$, 
the word $\omega_l=\langle{x}_1\cdots{x}_l\rangle$  is the \emph{prefix} of $\omega$ of length $l$, denoted by $pref(\omega,l)$. 
%
%
For any word $\omega$, $pref(\omega,0)=\lambda$. 
We denote by $PREF(\omega)$ the set $\{pref(\omega,l)\mid{l}\in\{0,\dots,|\omega|\}\}$.
$A^*$ results thus partially ordered according to the prefix relation, 
i.e. $\omega\le\omega^\prime$ if and only if $\omega\in{PREF}(\omega^\prime)$.

A partially ordered set $(L,\le)$ forms a \emph{meet-semilattice}, $\mathbf{L}=(L,\le,\wedge,\bullet)$, 
if it can be equipped with a \emph{meet} operation, $\wedge$ (i.e., greatest lower bound), and a minimum element, $\bullet$.
Then, a \emph{tree} is a set of elements, called \emph{paths}, in a meet-semilattice with the meet operation as a gluing function between them.
%
%

\begin{definition}[$\mathbf{L}$-tree]\label{def:LTrees}
Let $\mathbf{L}=(L,\le,\wedge,\bullet)$ be a  meet-semilattice. Then: 
\begin{itemize}[topsep=2pt,partopsep=1pt,itemsep=2pt,parsep=2pt]
\item A \emph{deterministic $\mathbf{L}$-tree} (\emph{tree} for short) is a pair $\X=(X,\wedge_X)$, where $X\subseteq{L}$ and $\wedge_X:X\times{X}\rightarrow{L}$ is the restriction to $X$ of the \emph{meet} $\wedge$.
\item An $\mathbf{L}$-tree is called \emph{prefix-closed} if it contains all the prefixes of its paths.
\item Given two $\mathbf{L}$-trees $\X=(X,\wedge_X)$ and $\Y=(Y,\wedge_Y)$, we say that $\X$ is a \emph{subtree} of $\Y$, 
noted $\X\subseteq\Y$, if $X\subseteq{Y}$ ($\subseteq$ denotes set-theoretical inclusion).
\end{itemize}
\end{definition}

Given a meet-semilattice $\mathbf{L}=(L,\le,\wedge,\bullet)$, the tree ${\cal{L}}=({L},\wedge)$ is canonically associated with $\mathbf{L}$ and 
formed by taking as set of paths the whole of $L$ and defining the gluing of data between paths as given by the $\wedge$ operation.

\begin{proposition}\label{prop:Boole}
Let $\mathbf{L}=(L,\le,\wedge,\bullet)$ be a meet-semilattice and ${\cal{L}}=({L},\wedge)$ its canonically associated tree. 
Then:
\begin{enumerate}[topsep=2pt,partopsep=1pt,itemsep=2pt,parsep=2pt] 
\item  a prefix-closed $\mathbf{L}$-tree is a meet-subsemilattice of $\mathbf{L}$;
\item the set of subtrees of ${\cal{L}}$, $Subtree({\cal{L}})$, with set-theoretical inclusion as partial order, is a boolean algebra (see~\cite{BGKL18}); and
\item given $\X\in{Subtree}({\cal{L}})$, the set of subtrees contained in $\X$, $Subtree(\X)$, is a boolean algebra in turn (see~\cite{BGKL18}).
\end{enumerate}
\end{proposition}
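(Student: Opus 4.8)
The plan is to treat the three parts separately, observing that the last two collapse onto a single structural fact: a subtree is completely determined by its underlying set of paths.

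For Part 1, I would first unwind the definition of prefix-closedness. Since $x$ is a prefix of $y$ exactly when $x\le y$, a prefix-closed $\mathbf{L}$-tree $\X=(X,\wedge_X)$ is precisely one whose path set $X$ is downward closed in $(L,\le)$. Then, given $x,y\in X$, the meet $x\wedge y$ is by definition the greatest lower bound of $x$ and $y$, so in particular $x\wedge y\le x$; downward closure of $X$ forces $x\wedge y\in X$. Hence $X$ is closed under $\wedge$, i.e.\ it is a meet-subsemilattice of $\mathbf{L}$ (and, being downward closed, a nonempty such $X$ also contains the minimum $\bullet$, since $\bullet\le x$ for every $x\in X$).

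For Part 2, the key observation I would isolate is that the second component $\wedge_X$ of any $\mathbf{L}$-tree is nothing but the restriction of the ambient meet, and therefore carries no independent information: a subtree of $\mathcal{L}=(L,\wedge)$ is uniquely determined by its path set $X\subseteq L$, and conversely every subset $X\subseteq L$ yields a subtree. This gives an order-isomorphism between $(Subtree(\mathcal{L}),\subseteq)$ and the powerset $(\mathcal{P}(L),\subseteq)$. I would then check that this isomorphism transports the Boolean operations correctly: the meet of two subtrees is $(X\cap Y,\wedge_{X\cap Y})$, their join is $(X\cup Y,\wedge_{X\cup Y})$, the complement of $(X,\wedge_X)$ is $(L\setminus X,\wedge_{L\setminus X})$, the top is $\mathcal{L}$ itself and the bottom is $(\emptyset,\emptyset)$ --- each of these being again a subtree precisely because any subset of $L$ is admissible. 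Since $(\mathcal{P}(L),\subseteq)$ is the prototypical Boolean algebra, the result follows (as recorded in~\cite{BGKL18}). Part 3 is then obtained by reading the same argument with $X$ in place of $L$: the subtrees of a fixed $\X=(X,\wedge_X)$ are exactly the pairs $(Y,\wedge_Y)$ with $Y\subseteq X$, so $Subtree(\X)$ is order-isomorphic to $\mathcal{P}(X)$, again a Boolean algebra. This requires neither that $X$ be closed under $\wedge$ nor that it be prefix-closed; the argument uses only that subtrees are parameterized by arbitrary subsets of the chosen path set.

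The only genuinely delicate point --- and the one I would state explicitly --- is the verification underpinning Parts 2 and 3 that the candidate Boolean operations stay inside $Subtree$. This is immediate here because the definition of $\mathbf{L}$-tree imposes no closure condition on the path set, so unions and complements of subtrees are automatically subtrees; were the notion of tree instead to carry a closure requirement (e.g.\ prefix-closedness, as in Part 1), complementation would fail and the structure would be at most a lattice rather than a Boolean algebra. Drawing this distinction sharply is what separates the present statement from Part 1.
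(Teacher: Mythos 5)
Your proof is correct, and it is worth noting that the paper itself contains no proof of this proposition: item 1 is left implicit as routine, and items 2 and 3 are delegated wholesale to the cited reference~\cite{BGKL18}, where the analogous result is established in the more general setting of labeled trees over a meet-semilattice (where, as Section~\ref{sec:categoricalModel} of the paper recalls, the subobject structure is only Heyting in general). Your argument is therefore a self-contained elementary replacement rather than a reconstruction of the paper's route, and it isolates exactly the right structural point: Definition~\ref{def:LTrees} imposes \emph{no} closure condition on the path set of an $\mathbf{L}$-tree --- the restricted meet $\wedge_X$ has codomain $L$, not $X$ --- so a subtree of $\mathcal{L}$ carries no data beyond its path set, $Subtree(\mathcal{L})$ is order-isomorphic to $(\mathcal{P}(L),\subseteq)$, and items 2 and 3 reduce to the powerset Boolean algebra; meanwhile for item 1 you correctly identify prefix-closure with downward closure in $(L,\le)$, so that $x\wedge y\le x$ forces $x\wedge y\in X$, with the sensible caveat that $\bullet\in X$ requires $X$ nonempty. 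What your approach buys is transparency: it makes visible that, under this paper's deterministic definitions, the Boolean structure is trivial (a powerset), whereas the citation to~\cite{BGKL18} imports machinery built for the harder nondeterministic case. One small imprecision in your closing remark: had trees been required to be prefix-closed, the downward-closed subsets would form not merely ``at most a lattice'' but a complete Heyting algebra (complements fail, but relative pseudo-complements exist) --- which is in fact precisely the situation in~\cite{BGKL18} that the deterministic specialization here is designed to escape, so the distinction you draw is even sharper than you state it.
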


We define an important property of monotonic functions between posets, by specialising to this case the general notion of adjunction between functors.

\begin{definition}[Adjoints]\label{def:adjoints}
	Given two posets, $P=(X,\leq)$ and $P^\prime=(X^\prime,\leq^\prime)$, and two monotonic functions, 
	$f:P\rightarrow{P}^\prime$ and $g:P^\prime\rightarrow{P}$, 
	we say that $f$ is \emph{right adjoint to} $g$ ($g$ is \emph{left adjoint to} $f$) whenever it happens that: 
	for all $x\in{X}$ and  $x^\prime\in{X}^\prime$,  $x^\prime\leq{f}(x)$ if and only if $g(x^\prime)\leq{x}$.
\end{definition}

In other words: $f(x)$ is the least upper bound of the set $\{x^\prime\mid{g}(x^\prime)\le{x}\}$.
Dually, $g(x^\prime)$ is the greatest lower bound of the set $\{x\mid{x}^\prime\le{f}{x}\}$. 
In the case of posets, an adjunction is also called a \emph{Galois connection}.

\begin{fact}\label{fac:adjoints}
The following hold:
\begin{enumerate}[topsep=2pt,partopsep=1pt,itemsep=2pt,parsep=2pt]
\item An adjoint to a given function, if it does exist, is unique, hence it is characterised by this property.
\item Composition of two adjoints on the same side is still an adjoint on the same side. We say that the first one is preserved by the second one.
\item Boolean operators are adjoints on one side ($\wedge, \implies$ on the right, $\vee, \perp$ on the left), so that they are preserved by operators adjoint on the same side. 
In this sense a monotonic function between Boolean algebras which is a one-side adjunction, 
will provide what we call a \emph{smooth translation} from one algebra to the other one, because it will preserve part of its structure; 
\emph{a fortiori} if it is a two-side adjunction.
\end{enumerate}
\end{fact}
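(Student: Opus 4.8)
The plan is to read the adjunction condition of Definition~\ref{def:adjoints} as a single biconditional that can be instantiated and chained, and to derive all three items from it. For item~1 I would argue by the order-theoretic Yoneda principle. Suppose a monotone $g:P'\to P$ had two right adjoints $f_1,f_2:P\to P'$. Then for every $x\in X$ and every $x'\in X'$ both adjunctions give $x'\le' f_i(x)$ iff $g(x')\le x$, so $x'\le' f_1(x)\iff x'\le' f_2(x)$ for all $x'$. Instantiating $x'=f_1(x)$ and using reflexivity forces $f_1(x)\le' f_2(x)$; the symmetric instantiation forces the reverse inequality; antisymmetry of $\le'$ then yields $f_1=f_2$. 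The argument for left adjoints is dual. This both proves uniqueness and justifies the slogan that an adjoint is \emph{characterised} by the biconditional.

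For item~2 the composition statement follows by concatenating biconditionals: if $f:P\to P'$ is right adjoint to $g:P'\to P$ and $f':P'\to P''$ is right adjoint to $g':P''\to P'$, then $x''\le'' f'(f(x))$ iff $g'(x'')\le' f(x)$ iff $g(g'(x''))\le x$, exhibiting $f'\circ f$ as right adjoint to $g\circ g'$; the left-adjoint case is dual. The accompanying notion of \emph{preservation} is then the poset form of ``right adjoints preserve meets, left adjoints preserve joins''. If $g$ is a left adjoint and $a\vee b$ exists, I would show $g(a\vee b)=g(a)\vee g(b)$ by checking that $g(a\vee b)$ has the universal property of the join of $g(a),g(b)$: it is an upper bound by monotonicity, and if $g(a)\le y$ and $g(b)\le y$ then transposing across the adjunction gives $a\le' f(y)$ and $b\le' f(y)$, hence $a\vee b\le' f(y)$ and, transposing back, $g(a\vee b)\le y$. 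Dually a right adjoint preserves existing meets.

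Item~3 is then a matter of recognising the Boolean operators as one-sided adjoints and invoking item~2. I would exhibit the diagonal $\Delta:L\to L\times L$ and observe that $(x,x)\le(a,b)$ iff $x\le a\wedge b$ and, dually, $(a,b)\le(x,x)$ iff $a\vee b\le x$, so that $\wedge$ is right adjoint to $\Delta$ and $\vee$ is left adjoint to $\Delta$; that $\perp$ is left adjoint (and $\top$ right adjoint) to the unique map $L\to\mathbf{1}$; and that the residuation law $a\wedge b\le c$ iff $b\le(a\implies c)$ makes $(a\implies-)$ right adjoint to $(a\wedge-)$. Combining these with the preservation half of item~2 yields that a smooth translation which is a right adjoint preserves $\wedge$ (and $\top$), while one which is a left adjoint preserves $\vee$ (and $\perp$), exactly the claimed pairing of operators with adjoint sides.

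The main obstacle I anticipate is the implication clause. Meet- and join-preservation fall out cleanly from ``one-sided adjoints preserve one-sided (co)limits'', but $\implies$ is adjoint only in a single variable, with the antecedent held fixed, so the bare hypothesis ``$h$ is a right adjoint'' does not by itself force $h(a\implies b)=h(a)\implies h(b)$ without also controlling how $h$ acts on antecedents. I would therefore either confine the claim to its genuinely structural content---that each operator is preserved by translations adjoint on its own side, in the precise sense furnished by items~1 and~2---or strengthen the hypothesis on the translation (e.g.\ that it is a morphism of the relevant Heyting/Boolean structure), flagging that the Fact is phrased as a soft ``preserves part of the structure'' assertion rather than as a claim that every one-sided adjoint is a full Boolean homomorphism.
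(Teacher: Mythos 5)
The paper states this Fact as background and gives no proof at all, so there is no ``paper proof'' to match; your task was effectively to supply the standard order-theoretic arguments, and you have done so correctly. Your treatments of items~1 and~2 are the canonical ones (instantiate the biconditional of Definition~\ref{def:adjoints} at $x'=f_1(x)$, and chain biconditionals for composites), and your identification of $\wedge$ and $\vee$ as the two adjoints of the diagonal, $\perp$/$\top$ as adjoints of the map to the one-point poset, and $(a\implies -)$ as right adjoint to $(a\wedge -)$ is exactly the intended reading. One small remark: for the equational preservation $h(a\wedge b)=h(a)\wedge h(b)$ you verify the universal property directly, but the paper's phrasing (``we say that the first one is preserved by the second one'') suggests the slightly slicker route through items~1 and~2 themselves: both $h\circ\wedge_L$ and $\wedge_M\circ(h\times h)$ are right adjoints whose left adjoints are visibly equal (both send $y$ to $(g(y),g(y))$), so uniqueness forces the squares to commute; the two arguments are of course equivalent. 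Your caveat about the implication clause is well taken and is the one genuinely delicate point in the statement: a bare one-sided adjoint does not preserve $\implies$ (the uniqueness trick fails there because the candidate left adjoints $a\wedge g(-)$ and $g(h(a)\wedge -)$ differ absent a Frobenius-type condition), and the paper implicitly concedes this --- its only use of full operator preservation is in Theorem~\ref{th:baseChange}, where $\mu^{-1}$ has adjoints on \emph{both} sides, which is precisely the Fact's ``a fortiori'' clause (a map preserving $\wedge,\vee,\top,\perp$ between Boolean algebras preserves complements by their uniqueness, hence preserves $\implies$). So your proposed reading --- one-sided adjoints preserve the operators on their own side, with full Boolean-homomorphism behaviour reserved for two-sided adjunctions --- is both correct and faithful to how the Fact is actually deployed in the paper.
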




\section{Resources, actors, transfers}\label{sec:contracts} 
In this section, we introduce a bare-bones view of contracts, seen as constructs imposing constraints on admissible sequences of transfers of resources (from some given set), among actors (from some given set), entitled to some form of ownership on these resources. 
The proposed model was first formulated in~\cite{BGKL19} and fully developed within the theory of reaction systems in~\cite{AL21}.

The definition of these sets can occur either extensionally or intensionally. 
An example of a first case is provided by a loan contract, where both the lender and borrower parties are identified by names, the lent resource is identified by means of some title of property, the transfer of the right to its use by the possession of a copy of the text of the contract itself, and the number of instalments to be payed is defined and timestamped.
An example of the second case is provided by bearer bonds, where the assets are identified, but the only identified actor is the emitter of the bond, or, vice versa, by a Memorandum of Understanding, whereby two actors commit to share future products, which are only partially defined, e.g., by mentioning their types, at the time of the contract.
 
The constraints set by the contract, on the other hand, range from general, overarching conditions, such as: ``an actor $k$ cannot transfer a resource $r$ in a situation where $k$ is not entitled to the use of $r$'', to specific ones, as in the mentioned case of regular payments of a loan.

Another category of constraints may impose some kind of transactionality, such that a certain set of transfers have to occur "simultaneously" among some actors.
In the loan example, with each payment received from the borrower, the lender must produce a receipt for it and give it to the borrower.
Moreover, the execution of some transfer can be conditional on the occurrence of some event, 
for example the expiration of a deadline for payment of instalments, or a car accident for starting a damage compensation procedure.

Regardless of these differences, we assume that the general form of a transfer can be expressed as 
``actor $k1$ yields resource $r$ to actor $k2$'', 
%
%
assuming that $r$, $k1$, and $k2$ are all unambiguously identified at the time the transfer occurs. 

Moreover, as we are interested here in the encoding of transfers on a digital ledger, we assume that each of $r$, $k1$, and $k2$ is suitably represented by some URI, corresponding, respectively, to a digital token representing the asset $r$, or to (possibly encrypted) accounts in a digital store associated with the ledger.

\begin{example}\label{ex:contract}
Let us consider the case of a contract binding \emph{AL} to sell a house, say \emph{house1} 
(i.e., to transfer some property document \emph{house1PropDoc}) to \emph{PB}, 
and \emph{PB} to pay a certain amount, say \euro 500K, to \emph{AL}. 
The actual payment (in the ``real'' world) is mediated through some identifiable resource, e.g., a set of banknotes, a cashier's cheque from \emph{PB}'s account in the name of \emph{AL}, a certain amount of bitcoins in \emph{PB}'s wallet, etc., that we represent here abstractly as the unique token \emph{PayKE500Doc1}, testifying that the payment has occurred 
(this is equivalent to, say, having a copy of the cheque taken in the real world and registered by the notar).
Then, the two transfers, of \emph{house1PropDoc} from \emph{AL} to \emph{PB} and of \emph{PayKE500Doc1} from \emph{PB} to \emph{AL}, constitute a \emph{transaction}, as they must occur ``at the same time'', meaning that any observation of the execution of the contract which reports the first exchange must also be able to report on the second exchange, and vice versa. 
We say that these two transfers are \emph{co-occurent}.
On the other hand no transfer of  
\emph{PayKE500Doc1} from \emph{PB} to some $k_1$ other than ${AL}$ (since the token is produced for this specific transaction) or of \emph{house1PropDoc} from \emph{AL} to some $k_2$ other than  \emph{PB} (since the house cannot be sold two times) can be co-occurrent with the previous two.
\end{example}

To sum up, we view any, virtual or tangible, asset or service mentioned in a contract, and whose creation, consumption, or \emph{transfer}, is relevant to the contract, as a \emph{resource}. 
More precisely, we consider that at each moment some \emph{token} representing some form of (possibly shared) ownership of a resource is allocated to some actor also mentioned in the contract. 
Instances of such tokens are the payment instrument and the property document mentioned in Example~\ref{ex:contract}.

In this paper, as we abstract away from the actual form and nature of the resources, being only interested in how the corresponding tokens are distributed among actors at any given moment (we call such a distribution a \emph{state of affairs}), we use the terms (resource) ``token'' or ``resource'', indifferently.
Hence, with each contract, we associate a  nonempty set, $\mathbf{R}$, of resource tokens, constituting the overall universe of discourse. 

The set ${\mathbf{R}}$ can be constructed so as to accommodate quotas of some bulk resource, in analogy with shares of a company. 
So, for example, the right to use 100 liters of water delivered by \emph{WaterInc.} can be represented as $[water,WaterInc,100,R]$, where $R$ is a unique identifier generated on the fly.

On the other hand, we see an \emph{actor} as representing an individual entity bound by a contract and participating in transfers of resource tokens, whether at the yielding or at the receiving end. 
We assume that for any given contract the actors who may participate in it cannot be also regarded as resources, 
so that they are modeled as a nonempty set 
${\mathbf{K}}$, with ${\mathbf{K}}\cap{\mathbf{R}}=\emptyset$. 

We model the information that a token $r$ is held by some actor $k$ as the \emph{allocation} of $r$ to $k$, noted $[r,k]$.
A \emph{state of affairs on} $(\mathbf{R},\mathbf{K})$ is then defined as a set of allocations such that each token in $\mathbf{R}$ is allocated to one and only one actor in $\mathbf{K}$.
Hence, we model a state of affairs $\varsigma$ as the graph of a map $s_\varsigma:\mathbf{R}\rightarrow\mathbf{K}$, 
The set of all states of affairs on $(\mathbf{R},\mathbf{K})$ is denoted by ${\cal{S}}(\mathbf{R},\mathbf{K})$. 
Note that, according to the discussion above, although each token is unique, we can model joint ownership of a resource by generating a different token for each quota of that resource to which an actor is entitled. 

A state of affairs evolves through exchanges of tokens among actors, possibly as part of complex transactions.
Thus, the yielding of a token $r$ by an actor $k_1$, originally holding it, to an actor $k_2\neq{k}_1$, which becomes the holder of $r$, constitutes a \emph{transfer}, noted $\theta=(r,k_1,k_2)$. 
Then, 
%
%
$r$ is the \emph{transferred resource}, denoted by \emph{res}$(\theta)$.
%
%
The set of all transfers on $(\mathbf{R},\mathbf{K})$ is denoted by ${TRA}(\mathbf{R},\mathbf{K})$. 

Although the ``real'' transfer of resources may occur physically in some way not controlled by a program, a distributed ledger trusted to maintain information on the contract must be updated about such transfers as they occur, i.e., it has to maintain the record of the transfers undergone by the corresponding tokens.
At any time instant, the current allocation of a resource can thus be reconstructed by following the sequence of transfers for its associated token(s).

To this end, we say that a transfer $\theta=(r,k_1,k_2)\in{TRA}(\mathbf{R},\mathbf{K})$ is \emph{applicable in} a state of affairs $\varsigma\in{\cal{S}}(\mathbf{R},\mathbf{K})$ if and only if $[r,k_1]\in\varsigma$. 
Then, $APL(\varsigma)$ is the set of transfers applicable in $\varsigma$.
Given $\mathbf{R}$ and $\mathbf{K}$, for $\theta=(r,k_1,k_2)\in{TRA}(\mathbf{R},\mathbf{K})$ and $\varsigma\in{\cal{S}}(\mathbf{R},\mathbf{K})$, such that
$\theta\in{APL}(\varsigma)$, the \emph{application} of $\theta$ \emph{to} $\varsigma$, noted $apl_\theta(\varsigma)$,
produces
$\varsigma^\prime=(\varsigma\setminus\{[r,k_1]\})\cup\{[r,k_2]\}\in{\cal{S}}(\mathbf{R},\mathbf{K})$. 

As stated before, contracts may require a given set of transfers to be co-occurrent. 
Their application must then occur in a \emph{transactional} way, 
i.e., all transfers in the set are applied to a given state of affairs $\varsigma$ if they are all applicable in $\varsigma$, and no transfer in the set is applied if any of them is not applicable in $\varsigma$.
Due to the overall constraint prohibiting multiple transfers of the same token from the same actor, and the constraint that each token can be held by only one actor at a time, the transfers in the set must refer to different resources.

Therefore, we define a \emph{bundle} to be a nonempty set of transfers $\Theta=\{\theta_1,\dots,\theta_n\}$ such that  
$res(\theta_i)\neq{res}(\theta_j)$, for $i\neq{j}$, $i,j\in\{1,\dots,n\}$.
The set of all bundles for $(\mathbf{R},\mathbf{K})$ is denoted by ${BUN}(\mathbf{R},\mathbf{K})$.
A bundle $\Theta\in{BUN}(\mathbf{R},\mathbf{K})$ is \emph{jointly applicable in} 
$\varsigma\in{\cal{S}}(\mathbf{R},\mathbf{K})$ if, for each $\theta\in\Theta$, $\theta\in{APL}(\varsigma)$.
We then denote the set of bundles jointly applicable in $\varsigma$ by $JPL(\varsigma)$.
The \emph{joint application}, of $\Theta\in{BUN}(\mathbf{R},\mathbf{K})$ to 
$\varsigma\in{\cal{S}}(\mathbf{R},\mathbf{K})$, such that $\Theta\in{JPL}(\varsigma)$ 
(noted $jpl_{\Theta}(\varsigma)$), produces $\varsigma^\prime=(\varsigma\setminus\{[r,k_1]\mid(r,k_1,k_2)\in\Theta\})\cup\{[r,k_2]\mid(r,k_1,k_2)\in\Theta\}\in{\cal{S}}(\mathbf{R},\mathbf{K})$.

During the execution of a contract, resources which were originally not in the availability of any participating actor can become available, as if produced in compliance with the contract. 
For example, a certain required document can be printed and kept by a notar as legal registration of some act; or a deliverable, required in a tender contract, hence typically not existing before the contract was put in place, can be produced as required during the execution of the tender. 
On the other hand, some resource can become no longer available for any exchange, as for example a car crashed in an accident covered by an insurance policy. 
In order to model such cases, we enrich the set $\mathbf{K}$ with the special symbols $\top$ and $\bot$, 
deemed \emph{environment actors},
with the property that, for any resource $r\in\mathbf{R}$, both allocations $[r,\top]$ and $[r,\bot]$ are admissible, 
while no transfer of the form $(r,\bot,k)$ is admissible. 
We call the elements in ${\mathbf{K}}\setminus\{\top,\bot\}$ \emph{proper actors} and allocations of the form $[r,k]$, 
for some $k\not\in\{\top,\bot\}$, \emph{proper allocations}.

The introduction of $\top$ and $\bot$ also provides a way to formalise the occurrence of events relevant to a contract 
(e.g., the deadline for an instalment is reached, an accident report is submitted, a cargo is delivered) in a way formally identical to transfers of tokens.
In particular, we define a set $\mathbf{Ev}\subseteq{\mathbf{R}}$ of \emph{event tokens} (short, \emph{events}), such that, 
for $e\in\mathbf{Ev}$, any $\varsigma\in{\cal{S}}(\mathbf{R},\mathbf{K})$ can contain only one of $[e,\top]$, representing a situation where the event $e$ has not occurred yet, or $[e,\bot]$, representing a situation where $e$ has already occurred,
Then, the only possible transfer for $e$ is $(e,\top,\bot)$ modeling the 
%
%
occurrence of the event.
As a consequence, an event occurs atomically and can never occur again.  

A typical usage of transfers associated with event occurrences is to insert them into bundles, for transactions which must occur in correspondence with specific events. 
For example, in the loan contract, an instalment must be payed, and the corresponding receipt must be signed, with each 10th day of the month.


\section{Modeling contracts}\label{sec:automaton}
In this section, we show how a notion of contract, seen as a formal construct defining a structured collection of obligations, can be modeled in terms of resources, actors, and transfers, as introduced in Section~\ref{sec:contracts}. 

In particular, we show how specific bundles of transfers, whose application transforms states of affairs into states of affairs related to a universe ${\cal{S}}(\mathbf{R},\mathbf{K})$, correspond to transitions of a finite-state machine describing the possible executions of a contract on $(\mathbf{R},\mathbf{K})$. 
This constitutes the basis for relating sequences of transfers encoded in a ledger, as discussed in Section~\ref{sec:encoding}, to the evaluation of queries on the excution of the contract, as  discussed in Sections~\ref{sec:categoricalModel} and~\ref{sec:language}.

In essence, a contract must state, at least implicitly: 
\begin{inparaenum}[]
\item the conditions of its validity;
\item the acts through which the obligations of the contract are discharged and the conditions under which such acts can be carried out; and
\item the situations corresponding to the completion of the obligations set by the contract (we then say that the contract is \emph{honoured}) as well as those corresponding to a \emph{breach} of these obligations.
\end{inparaenum}  
The latter may be associated with repair or compensation actions, which may thus be seen as alternate ways for honouring the contract, or as constituting the definition of a different contract altogether.

We can then encode the legal form of a contract ${\cal{C}}$ in a \emph{legal contract automaton}, a finite state machine ${\cal{M}}^l_{\cal{C}}=(V^l,E^l,F,s^l,t^l,Act,TO,\lambda^l,\upsilon)$, where conditions correspond to states and the discharge of all of the obligations leading from a state to another corresponds to transitions. 
In particular: 
\begin{inparaenum}[{}]
	\item $V^l$ is the set of \emph{state nodes} (short, \emph{states});
	\item $E^l$ is the set of \emph{transition edges} (short, \emph{transitions}), with $s^l:E^l\rightarrow{V}^l$ the \emph{source map} and $t^l:E^l\rightarrow{V}^l$ the \emph{target map};
	\item $F\subseteq{V}^l$ is the set of \emph{final} states;
	\item $Act$ is the set of obligation-discharging \emph{actions}, with $TO\subseteq{Act}$ the set of \emph{timeout} actions; 
	\item $\lambda^l:E^l\rightarrow\wp(Act)$ is a function labeling each transition with the set of actions needed for its firing; and
	\item $\upsilon:F\rightarrow\{HON,BRC\}$ is a function mapping each final state into the corresponding outcome (honoured or breach).
\end{inparaenum}

We assume that if a timeout action $to$ is in $\lambda(\eta)$ for some $\eta\in{E}^l$, then $to$ is the only action in $\lambda(\eta)$, i.e., the latter is a singleton.
Moreover, we assume that $\C$ is such as to induce a strongly deterministic automaton ${\cal{M}}^l_{\cal{C}}$, 
in the sense that for $\eta_1,\eta_2\in{E}^l$ such that $s^l(\eta_1)=s^l(\eta_2)$, 
neither $\lambda^l(\eta_1)\subseteq\lambda^l(\eta_2)$, nor $\lambda^l(\eta_2)\subseteq\lambda^l(\eta_1)$ hold, while the case $\lambda^l(\eta_1)\cap\lambda^l(\eta_2)\neq\emptyset$ is admitted. 

The usual notion of trajectory between states, as given by the sequence of transitions leading from one to the other, can then be adapted to ${\cal{M}}^l_{\cal{C}}$.
The set of all trajectories in ${\cal{M}}^l_{\cal{C}}$ is denoted by $\Pi^l_{{\cal{C}}}$ and the set of all initial trajectories 
(i.e., originating in the initial state of ${\cal{M}}^l_{\cal{C}}$) is denoted by $\Pi^{l,in}_{{\cal{C}}}$. 

We introduce a sketch of a part of an insurance policy as a running example. 
%
%
\begin{example}\label{ex:insurance}
%
%
\emph{Fred} has activated a comprehensive insurance policy for his car with the company \emph{SURE!}, whereby a black box has been mounted on \emph{Fred}'s car. 
As a part of this policy, the coverage of expenses to repair car damages, even if due to natural causes, is provided, according to a certain procedure, defining a contract ${\cal{C}}_d$.
In particular, the procedure starts when a damage event occurs and the blackbox makes a report on it available to \emph{Fred}.  
Then \emph{Fred} has the possibility to file a claim for reimbursement on the \emph{SURE!} system and to upload the report on its server.
Then, the \emph{SURE!} system will issue an offer for reimbursement to \emph{Fred}, who might accept or reject it.
If \emph{Fred} communicates that he accepts the offer, the \emph{SURE!} system issues both a refund order and a communication that the policy premium is increased.
If \emph{Fred} communicates that he rejects the offer, no further action is needed.
Each act must be performed within some deadline. 
\end{example}

We sketch here the content of states and transitions defining ${\cal{M}}^l_{{\cal{C}}_d}$.

\begin{enumerate}[topsep=2pt,partopsep=1pt,itemsep=2pt,parsep=2pt]
	\item The procedure enters the \emph{Active} state when the damage event occurs AND the damage report is produced. A deadline is set to present the claim.
	\item The \emph{Claimed} state is reached when the claim is filed AND the report is acquired, before the deadline expires. At this stage, an obligation for \emph{SURE!} to make an offer becomes active, which must be discharged before a set deadline, on pain of violating the contract.
	\item The \emph{Offered} state is reached when the reimbursement offer is made, so that the obligation is discharged. An obligation for \emph{FRED} to respond to the offer is activated. This can be discharged by either accepting or rejecting the offer, but also by simply letting the deadline pass, after which the proposal becomes void (this would not constitute a violation).
	\item The \emph{Accepted} state is reached when \emph{FRED} accepts the reimbursement offer. An obligation for the \emph{SURE!} system to issue the refund AND to apply the raise to the premium is then activated. Also this obligation is under pain of violating the contract, if not discharged within the deadline.
	\item The \emph{Refunded} state is reached when the \emph{SURE!} system issues the refund AND applies the raise in the premium, before the deadline.
	\item The \emph{Rejected} state is reached when \emph{FRED} rejects the offer.
	\item Each missed deadline brings into a corresponding state. Deadlines missed by \emph{SURE!} represent a violation, while those missed by \emph{FRED} do not.
\end{enumerate}

Figure~\ref{fig:contract2} depicts the resulting legal contract automaton ${\cal{M}}^l_{{\cal{C}}_d}$. 
Each state from which no transition starts, namely \emph{Rejected}, \emph{Refunded}, and each of the four \emph{Out} states, is in $F$, as it represents a possible completion of the procedure. 
For $v\in{F}$, $v$ is depicted in green if $\upsilon(v)=HON$, in red if $\upsilon(v)=BRC$.

\begin{figure}[htb]
	\centering
	\includegraphics[width=.8\linewidth]{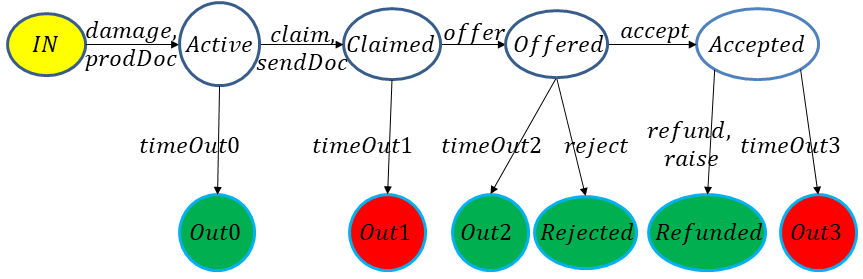}
	\caption{The finite state machine for the contract from Example~\ref{ex:insurance}.}\label{fig:contract2}
\end{figure}

It is to be noted that the procedure above is activated in the \emph{SURE!} system each time a damage is recorded, and proceeds according to the same steps independent of whether the customer is \emph{FRED} or any other. 
Moreover, the contract might even be realised as a service and offered to insurance companies which might customise it, e.g., by setting deadlines, setting maximum refunds, or defining premiums and their raise.
Without loss of generality, we can therefore consider that, for each insurer and customer, a fresh instance of the contract (and of the associated state machine) is always ready to control the next procedure, the state \emph{IN} in Figure~\ref{fig:contract2} representing the initial state for such an instance. 

We now proceed to modeling contracts in terms of the approach presented in Section~\ref{sec:contracts}. 
%
%
Indeed, leveraging the notions of \emph{resource}, \emph{actor}, \emph{transfer}, and \emph{bundle}, 
we identify all types of conditions in the contract with predicates on states of affairs and the required actions with transfers. 
The latter is to say that an obligation is considered to be discharged when there is evidence that some actor has provided some resource (which might represent a physical asset, access to services, or execution of some task) to some other actor, i.e., a transfer of the corresponding token has been recorded.
Where the contract requires a certain set of obligations to be completed before a different set comes into force, we model this in terms of a bundle of co-occurrent transfers.

Hence, the discharge of obligations in a legal contract automaton ${\cal{M}}^l_{\cal{C}}$ encoding a contract ${\cal{C}}$ realises some change in the states of affairs for an underlying system of tokens in a set $\mathbf{R}_{\cal{C}}$ 
(one for each piece of documentation mentioned in the contract) and actors in a set $\mathbf{K}_{\cal{C}}$ 
(one for each party involved in the contract).
In particular, given a state of affairs in ${\cal{S}}(\mathbf{R}_{\cal{C}},\mathbf{K}_{\cal{C}})$ holding in the initial state of ${\cal{M}}^l_{\cal{C}}$, each sequence of discharge actions (i.e., transfers) leads to the production of some state of affairs in ${\cal{S}}(\mathbf{R}_{\cal{C}},\mathbf{K}_{\cal{C}})$. 
It is now important to notice that, while a state transition is realised only when all the obligations needed are discharged 
(i.e., all the transfers in a bundle have been applied), it might be the case that other transfers, not composing a whole bundle, have been applied.  

As an example, $Out0$ is reached whenever the event $(timeOut0,\top,\bot)$ occurs before \emph{both} transfers corresponding to the discharge of the $claim$ and $sendDoc$ obligations have been applied. 
Hence, once ${\cal{M}}^l_{{\cal{C}}_d}$ reaches the state $Out0$, there are three possible corresponding states of affairs: 
one for not having applied either of $claim$ or $sendDoc$, and one each for having applied just one of them.

Moreover, not all states of affairs in ${\cal{S}}(\mathbf{R},\mathbf{K})$ can be mapped to some state of a legal contract automaton on $(\mathbf{R},\mathbf{K})$. 
For example, 
%
%
a state of affairs where a claim has been filed but the report of the damage event has not been made available, 
or, vice versa, such a report is available, but no claim has been filed (yet) for that accident, does not correspond to any state of ${\cal{M}}^l_{{\cal{C}}_d}$, as a consistent state will be reached only when both of the associated transfers will have been applied 
(or the $timeOut0$ event will have fired.). 

The model based on ${\cal{M}}^l_{\cal{C}}$ must therefore be integrated with a second finite state machine, univocally induced from ${\cal{M}}^l_{\cal{C}}$, 
the \emph{contract execution automaton} ${\cal{M}}^e_{\cal{C}}=(V^e,E^e,F,s^e,t^e,Act,TO,\lambda^e,\upsilon)$, 
defined on the same sets of actions and final states as ${\cal{M}}^l_{\cal{C}}$, which considers all possible sequences of actions (transfers) in a concrete execution of the contract, compatible with its legal definition. 
In ${\cal{M}}^e_{\cal{C}}$, transitions are labeled with single actions, realised individually, and not with a set of actions to be completely realised for the transition to occur.

In particular (and specifying only the components not inherited from ${\cal{M}}^l_{\cal{C}}$): 
\begin{inparaenum}[{}]
	\item $V^e\supseteq{V}^l$ is the set of \emph{states};
	\item $E^e$ is the set of \emph{transitions}, with $s^e:E^e\rightarrow{V}^e$ the \emph{source map} and $t^e:E^e\rightarrow{V}^e$ the \emph{target map}; and
	\item $\lambda^e:E^e\rightarrow{Act}$ is a function labeling each transition with the specific action needed for its firing.
\end{inparaenum}

The derivation of ${\cal{M}}^e_{\cal{C}}$ from ${\cal{M}}^l_{\cal{C}}$ 
proceeds according to Construction~\ref{con:execution}.

\begin{construction}\label{con:execution}
We isolate three components of the construction.
\begin{enumerate}[topsep=2pt,partopsep=1pt,itemsep=2pt,parsep=2pt]
	\item\label{it:intermediateStates} First, we include in $V^e$ the whole of $V^l$. 
	Then, for $\eta\in{E}^l$, we build all the ``intermediate'' states between $s^l(\eta)$ and $t^l(\eta)$, each corresponding to a nonempty proper subset of $\lambda^l(\eta)$. 
	We call $Lin(\eta)$ the resulting set of states.
	For example, given the states \emph{Active} and \emph{Claimed} of $V^l_d$, $Lin(\mbox{\emph{(Active,Claimed)}})$ contains the states \emph{Active/Claim} and \emph{Active/Sent}, corresponding to the partial realisations of the set of actions $\lambda^l(\mbox{\emph{(Active,Claimed)}})=\{\mbox{\emph{claim}},\mbox{\emph{sendDoc}}\}$.
	\item\label{it:intermediateTransitions} Then, for $\eta\in{E}^l$, we include in $E^e$ all the transitions connecting $s^l(\eta)$ to $t^l(\eta)$ through the states in $Lin(\eta)$. 
	That is to say that, for $\eta\in{E}^l$, and for $s_1,s_2\in{Lin}(\eta)\cup\{s^l(\eta),t^l(\eta)\}$, respectively associated with sets $\alpha_1,\alpha_2=\alpha_1\cup\{a\}$ for $a\in\lambda^l(\eta)\setminus\alpha_1$, $E^e$ will include a transition between $s_1$ and $s_2$ labeled with $a$. 
	Note that this also includes transitions leaving $s(\eta)$ (where $\alpha_1=\emptyset$) and transitions reaching $t(\eta)$ 
	(where $\alpha_2=\lambda^l(\eta)$). 
	For example, in $E^e_d$ a transition exists from \emph{Active} to each of \emph{Active/Claim} and \emph{Active/Sent}, labeled with the corresponding action, and from each of \emph{Active/Claim} and \emph{Active/Sent} to \emph{Claimed}, labeled with the action needed to complement the set in the original transition from \emph{Active} to \emph{Claimed} i.e., \emph{sendDoc} for \emph{Active/Claim} and \emph{claim} for \emph{Active/Sent}.
	\item The two items above are subsumed under a general construction, whereby $V^e$ and $E^e$ include all of the states and transitions needed to account for the possible interleaving of actions labeling different transitions in $E^l$ originating from a given state, until the completion of one of these sets 
	(no two transitions from the same state are labeled with the same set of actions), thus keeping track of the ``progress'' towards completion of each such set.
	To this end, for a state $v\in{V}^l$, we take all the transitions in ${E}^l$ with source in $v$, let this be $T(v)$, and add to $V^e$ a node for each set in $\wp(U(v))\setminus{Cmpl}(v)$, where $U(v)=\bigcup\{\lambda^l(\eta)\mid{\eta}\in{T}(v)\}$ and ${Cmpl}(v)=\{X\in{U}(v)\mid(\nexists\eta\in{E}^l)[\lambda^l(\eta)\subseteq{X}]\}$. 
	In other words, we consider all possible subsets in the union of all the sets of actions labeling these transitions, minus those subsets corresponding to the completion of one transition,	as only one such set can be completed. 
	Indeed, all and only the states in $V^l$ are taken to correspond to the completion of $\lambda^l(\eta)$ for any $\eta\in{T}(v)$.
	(This also means that once a state $v^\prime\in{V}^l$ is reached 
	(i.e., $\lambda^l(\eta)$ has been completed for some $\eta\in{T}(v),v^\prime=t^l(\eta)$), 
	all the actions representing progress towards other states become irrelevant, as a different set of intermediate states is constructed for transitions in $T(v^\prime)$). 
	Then, the set $E^e$ includes all the needed transitions between these states, one for each action representing an increment towards completion of $\lambda^l(\eta)$, for some $\eta\in{T}(v)$. 
	Transitions are only added towards reaching a state in $V^l$ which is the target for the corresponding transition in $E^l$, as defeasing of actions is not admitted.
	Note that by definition of timeout state, each such state is reached with just one action, so that a timeout would interrupt any possible trajectory in ${\cal{M}}^e_{{\cal{C}}}$ not yet landed in a state from $V^l$.
	For example, 
	%
	%
	irrespective of whether the \emph{timeOut0} event occurs in \emph{Active}, \emph{Active/Claim}, or \emph{Active/Sent}, it immediately completes the singleton set of actions needed to reach \emph{Out0}. 
	Hence, in addition to a transition from \emph{Active} to \emph{Out0}, generated in Item~\ref{it:intermediateTransitions}, 
	$E^e_d$ has a transition to \emph{Out0} from each of \emph{Active/Claim} and \emph{Active/Sent}, both of them labeled \emph{timeOut0}.
\end{enumerate}
\end{construction} 
%
Similarly to the case for ${\cal{M}}^l_{\cal{C}}$, we denote the set of trajectories in ${\cal{M}}^e_{\cal{C}}$ by $\Pi^e_{{\cal{C}}}$ and the set of initial trajectories by $\Pi^{e,in}_{{\cal{C}}}$. 
The difference between considering all trajectories in $\Pi^{e,in}_{\cal{C}}$ or only those passing through only the intermediate states built in $Lin(\eta)$ for some $\eta\in{E}^l$,
(which correspond to a trajectory in $\Pi^l_{\cal{C}}$) will play an essential role in the definition of the 
%
%
logics in Section~\ref{sec:categoricalModel}.


Another caveat is in order. 
In principle, a state of affairs, even if containing allocations for all of the resources and actors involved in a contract, might include allocations for a set of resources concerning other aspects of their rapport. 

For example, and extending Example~\ref{ex:insurance}, some tokens in the complete model for a car insurance policy between an \emph{insurer} company and a \emph{customer} might refer to topics such as \emph{coverage of theft},\emph{ extension of civil liability}, 
\emph{deadlines for payments,} etc, not relevant to the procedures for damage management. 
Hence, 
%
%
%
an automaton ${\cal{M}}_{{\cal{C}}_x}$ (either legal or execution) might simply model some specific section ${\cal{C}}_x$ in a wider contract ${\cal{C}}$, relative to a universe
of states of affairs ${\cal{S}}(\mathbf{R}_{\cal{C}},\mathbf{K}_{\cal{C}})$.
Then, we should regard states in ${\cal{M}}_{{\cal{C}}_x}$ as inducing some relation on (some subset of) 
%
%
${\cal{S}}(\mathbf{R}_{\cal{C}},\mathbf{K}_{\cal{C}})$.
A compositional view of contracts can then ensue, resulting in progressive refinements of this relation.

In general, for a contract ${\cal{C}}$, we will identify the subset ${\cal{S}}_{\cal{C}}\subseteq{\cal{S}}(\mathbf{R}_{\cal{C}},\mathbf{K}_{\cal{C}})$, of states of affairs
\emph{consistent with} ${\cal{M}}^l_{\cal{C}}$, in the sense that they can occur as the combined effect of a sequence of transfers which can occur as a possible execution of the obligations in the contract, according to ${\cal{M}}^e_{\cal{C}}$.

The translation of ${\cal{M}}^l_{\cal{C}}$ in terms of a $(\mathbf{R}_{\cal{C}},\mathbf{K}_{\cal{C}})$ system 
%
%
proceeds as follows.

\begin{enumerate}[topsep=2pt,partopsep=1pt,itemsep=2pt,parsep=2pt]
	\item Each party involved in ${\cal{C}}$  is modelled as an actor $k\in\mathbf{K}_{\cal{C}}$.
	\item Each resource to be produced or transfered in order to discharge an obligation in ${\cal{C}}$ is modelled as a token
	 $r\in\mathbf{R}_{\cal{C}}$. In particular, the production of a document corresponds to a transfer from $\top$ to a proper actor.
	\item Each accident to be documented in ${\cal{C}}$ is modelled as an event $e\in\mathbf{Ev}_{\cal{C}}\subseteq\mathbf{R}_{\cal{C}}$.
	The expiration of a deadline is also modelled as an event. Event occurrences are therefore modelled as transfers of the form $(e,\top,\bot)$.
	\item Each action discharging an obligation (i.e., in $Act$) is modelled as a transfer, in a set $\mbox{TRA}_{\cal{C}}$, of the resource associated with that obligation.
	\item Each set of actions collectively ensuring a transition between states in ${\cal{M}}^l_{\cal{C}}$ is modelled as a bundle, in a set $\mbox{BUN}_{\cal{C}}$, of the corresponding transfers.
\end{enumerate}

We can now define the system of resources and actors modeling a contract.

\begin{definition}[Resource-based contract model]
Let ${\cal{C}}$ be a contract and let ${\cal{M}}^l_{\cal{C}}=(V^l,E^l,F,s^l,t^l,Act,\lambda^l,\upsilon)$ be the associated contract automaton.
Then a \emph{resource-based contract model} is a tuple $\mathbf{RS}_{\cal{C}}=({\cal{S}}_{\cal{C}},\mbox{BUN}_{\cal{C}},\mbox{TRA}_{\cal{C}})$, together with three mappings:
$\gamma:\Pi_{{\cal{C}}}\rightarrow\wp({\cal{S}}_{\cal{C}})$, $\beta:E^l\rightarrow\mbox{BUN}_{\cal{C}}$, and $\rho:Act\rightarrow\mbox{TRA}_{\cal{C}}$, collectively enjoying the following properties:

\begin{enumerate}[topsep=2pt,partopsep=1pt,itemsep=2pt,parsep=2pt]
	\item For $\tau{\in}\Pi_{{\cal{C}}},\eta{\in}{E}^l,\Theta{=}\beta(\eta)$: $\langle\tau\cdot\eta\rangle{\in}\Pi_{{\cal{C}}}\Rightarrow(\forall\varsigma{\in}\gamma(\tau))[jpl_\Theta(\varsigma){\in}\gamma(\langle\tau\cdot\eta\rangle)]$.
	\item For $\eta\in{E}^l$: $\beta(\eta)=\{\rho(act)\mid{act}\in\lambda(\eta)\}$\footnote{That is, $\beta$ is completely consistent with $\lambda^l$.}.
\end{enumerate}
\end{definition}

We can now show the constrution of the legal contract automaton ${\cal{M}}^l_{{\cal{C}}_d}$ for the contract ${\cal{C}}_d$ of 
Example~\ref{ex:insurance}, managing damage events 
(the contract execution automaton ${\cal{M}}^e_{{\cal{C}}_d}$ is induced according to Construction~\ref{con:execution}). 
Since each arrow in $E^l$ for ${\cal{M}}^l_{{\cal{C}}_d}$ is unique to an ordered pair of states, we identify here an edge in $E^l$ with the corresponding pair. 
Moreover, since, for each state in $V^l$, there is a unique trajectory leading to it, we identify a state $v$ and the corresponding  trajectory, denoted by $\overline{v}$, in the definition of $\gamma$.

\begin{example}\label{ex:remo}	
With reference to Example~\ref{ex:insurance}, we define:
	\begin{itemize}[topsep=2pt,partopsep=1pt,itemsep=2pt,parsep=2pt]
		\item $\mathbf{R}_{{\cal{C}}_d}=\{oldPrem,claim,damageEv,damageDoc,\mbox{offer},{reject},{accept},raise$, 
		$\mbox{refund},out0,out1,out2,out3\}$.
		\item $\mathbf{K}_{{\cal{C}}_d}=\{customer,insurer,\bot,\top\}$.
		\item The function $\gamma$ is defined as follows\footnote{As is customary, we will write a singleton $\{\phi\}$ as simply $\phi$, for a set $\phi$.}:
		\begin{itemize}[topsep=2pt,partopsep=1pt,itemsep=2pt,parsep=2pt]
			\item $\gamma(\overline{In})=\{[oldPrem,customer]\}\cup\{[r,\top]\mid{r}\in\mathbf{R}_{{\cal{C}}_d}\setminus\{oldPrem\}\}$.
			\item $\gamma(\overline{Active})=\{[oldPrem,customer],[damageDoc,customer],[damageEv$, 
			$\bot]\}\cup\{[r,\top]\mid{r}\in\mathbf{R}_{{\cal{C}}_d}\setminus\{oldPrem,damageEv,damageDoc\}\}$.			
			\item $\gamma(\overline{Claimed})=\{[claim,insurer],[damageDoc,insurer],[damageEv,\bot]$, $[oldPrem,customer]\}\cup\{[r,\top]\mid{r}\in\mathbf{R}_{{\cal{C}}_d}\setminus\{claim,damageEv$, $oldPrem,damageDoc\}\}$;
			\item $\gamma(\overline{\mbox{Offered}})=\{[\mbox{offer},customer],[damageEv,\bot],[damageDoc,insurer]$,
			$[oldPrem,customer],[claim,insurer]\}\cup\{[r,\top]\mid{r}\in\mathbf{R}_{{\cal{C}}_d}\setminus\{\mbox{offer}$, 
			$damageEv,damageDoc,oldPrem,claim\}\}$;
			\item $\gamma(\overline{Accepted})=\{[accept,insurer],[\mbox{offer},customer],[oldPrem,customer]$,  $[damageEv,\bot],[claim,insurer],[damageDoc,insurer]\}\cup\{[r,\top]\mid{r}\in\mathbf{R}_{{\cal{C}}_d}\setminus\{claim,accept,damageEv,damageDoc,\mbox{offer},oldPrem\}\}$.
			\item $\gamma(\overline{\mbox{Refunded}})=\{[accept,insurer],[raise,customer],[damageEv,\bot]$, 
			$[damageDoc,insurer],[\mbox{refund},customer],[claim,insurer],[oldPrem$, 
			$\bot]\},\cup\{[r,\top]\mid{r}\in\{reject,out0,out1,out2,out3\}$;
			\item $\gamma(\overline{Rejected})=\{[damageEv,\bot],[damageDoc,insurer],[claim,insurer]$, 
			$[\mbox{offer},customer],[reject,insurer]\}\cup\{[r,\top]\mid{r}\in\mathbf{R}_{{\cal{C}}_d}\setminus\{claim,reject$, 
			$\mbox{offer},damageDoc,damageEv\}\}$;
			\item $\gamma(\overline{Out0})=\{\varsigma{\in}{\cal{S}}(\mathbf{R}_{{\cal{C}}_d},\mathbf{K}_{{\cal{C}}_d})\mid[out0,\bot]{\in}\varsigma,[outX,\top]{\in}\varsigma\mbox{ for }X{\in}\{1,2,3\}\}$.
			\item $\gamma(\overline{Out1})=\{\gamma{\in}{\cal{S}}(\mathbf{R}_{{\cal{C}}_d},\mathbf{K}_{{\cal{C}}_d})\mid[out1,\bot]{\in}\gamma,[outX,\top]{\in}\varsigma\mbox{ for }X{\in}\{0,2,3\}\}$.
			\item $\gamma(\overline{Out2})=\{\gamma{\in}{\cal{S}}(\mathbf{R}_{{\cal{C}}_d},\mathbf{K}_{{\cal{C}}_d})\mid[out2,\bot]{\in}\gamma,[outX,\top]{\in}\varsigma\mbox{ for }X{\in}\{0,1,3\}\}$;
			\item $\gamma(\overline{Out3})=\{\gamma{\in}{\cal{S}}(\mathbf{R}_{{\cal{C}}_d},\mathbf{K}_{{\cal{C}}_d})\mid[out3,\bot]{\in}\gamma,[outX,\top]{\in}\varsigma\mbox{ for }X{\in}\{0,1,2\}\}$.
			\end{itemize}
		The set $BUN_{{\cal{C}}_d}$ and the function $\beta$ are jointly defined as follows, implicitly defining also the set $TRA_{{\cal{C}}_d}$ and the $\rho$ function:
		\begin{itemize}[topsep=2pt,partopsep=1pt,itemsep=2pt,parsep=2pt]
			\item $\beta(In,Active)=\{(damageEv,\top,\bot),(damageDoc,\top,customer)\}$ 
			\item $\beta(Active,Claimed)=\{(claim,\top,insurer),(damageDoc,customer$, $insurer)\}$, 
			\item $\beta(Claimed,\mbox{Offered})=\{(\mbox{offer},\top,customer)\}$, 
			\item $\beta(\mbox{Offered},Accepted)=\{(accept,\top,insurer)\}$,
			\item $\beta(Accepted,\mbox{Refunded})=\{(\mbox{refund},\top,customer),(oldPrem,customer$, $\bot),(raise,\top,customer)\}$,
			\item $\beta(\mbox{Offered},Rejected)=\{(reject,\top,insurer)\}$,
			\item $\beta(Active,Out0)=\{(out0,\top,\bot)\}$,
			\item $\beta(Active,Out1)=\{(out1,\top,\bot)\}$,
			\item $\beta(Active,Out2)=\{(out2,\top,\bot)\}$,
			\item $\beta(Active,Out3)=\{(out3,\top,\bot)\}$,
		\end{itemize}  
	\end{itemize}  
\end{example}

By construction of $\mathbf{RS}_{\cal{C}}$, one can derive the sets of sequences of bundles and of transfers leading from any given state of affairs in $\mathbf{RS}_{\cal{C}}$ to any other state of affairs in $\mathbf{RS}_{\cal{C}}$ reachable from the first, following a trajectory in ${\cal{M}}^l_{\cal{C}}$.

The functions $\beta$ and $\rho$ then induce the two notions of \emph{bundle} and 
\emph{transfer} \emph{trajectory-labeling}, for trajectories in ${\cal{M}}^l_{\cal{C}}$ and ${\cal{M}}^e_{\cal{C}}$, respectively. 

\begin{definition}[Trajectory labelings]\label{def:transferLabeling}
	Let ${\cal{C}}$ be a contract, let ${\cal{M}}^l_{\cal{C}}$ and ${\cal{M}}^e_{\cal{C}}$ be the associated legal contract  and contract execution automata, and let $\tau^l=\langle{t}^l_1\cdots{t}^l_n\rangle\in\Pi^l_{{\cal{C}}}$ and $\tau^e=\langle{t}^e_1\cdots{t}^e_m\rangle\in\Pi^e_{{\cal{C}}}$, be such that all and only the states in $V^l$ reached with $\tau^e$ are also reached with $\tau^l$, in the same order. In this case, we say that $\tau^e$ is an \emph{unfolding} of $\tau^l$, and we define:
	\begin{itemize}[topsep=2pt,partopsep=1pt,itemsep=2pt,parsep=2pt]
		\item $B^\beta_{{\cal{C}}}(\tau^l)=\langle\beta(t^l_1)\cdots\beta(t^l_n)\rangle$ to be the \emph{bundle trajectory-labeling of} $\tau^l$;
		\item $L^\rho_{{\cal{C}}}(\tau^e)=\langle\rho(t^e_1)\cdots\rho(t^e_n)\rangle$,
		 to be the \emph{transfer trajectory-labeling} of $\tau^e$.
	\end{itemize}
If $L^\rho_{{\cal{C}}}(\tau^e)$ is such that only states in $\bigcup_{\eta\in{E}^l}{Lin}(\eta)$ 
(see Construction~\ref{con:execution}) are visited, then we say that $L^\rho_{{\cal{C}}}(\tau^e)$ is a \emph{linearisation} of $B^\beta_{{\cal{C}}}(\tau^l)$.
\end{definition}

The set of all bundle trajectory-labelings for trajectories in $\Pi^l_{{\cal{C}}}$ is denoted by $\mathbf{B}^\beta_{{\cal{C}}}$ and the set of all bundle trajectory-labelings for initial trajectories in $\Pi^{l,in}_{{\cal{C}}}$ is denoted by $\mathbf{B}^{\beta,in}_{\cal{C}}$.
Analogously, the set of all transfer trajectory-labelings for trajectories in $\Pi^e_{{\cal{C}}}$ is denoted by
$\mathbf{L}^\rho_{{\cal{C}}}$ and the set of all trajectory transfer labelings for initial trajectories in $\Pi^{e,in}_{\cal{C}}$ is denoted by $\mathbf{L}^{\rho,in}_{\cal{C}}$.
A prefix-induced partial order is therefore defined on both sets of labelings.
	
	\begin{enumerate}[topsep=2pt,partopsep=1pt,itemsep=2pt,parsep=2pt]
	\item Given two trajectories $\tau^l_1,\tau^l_2\in\Pi^l_{{\cal{C}}}$, $\tau^l_1{\le^l_\C}\tau^l_2$ \emph{iff} $\tau^l_1{\in}{PREF}(\tau^l_2)$
	\item Given two bundle trajectory-labelings $\beta_1=B^\beta_{{\cal{C}}}(\tau^l_1)$ and $\beta_2=B^\beta_{{\cal{C}}}(\tau^l_2)$, 	$\beta_1{\le_\beta}\beta_2$ \emph{iff} $\beta_1{\in}{PREF}(\beta_2)$ \emph{iff} $\tau^l_1{\le^l_\C}\tau^l_2$.
	\item Given two trajectories $\tau^e_1,\tau^e_2\in\Pi^e_{{\cal{C}}}$, $\tau^e_1{\le^e_\C}\tau^e_2$ \emph{iff} $\tau^e_1{\in}{PREF}(\tau^e_2)$.
	\item Given two transfer trajectory-labelings $\rho_1=L^\rho_{{\cal{C}}}(\tau^e_1)$ and $\rho_2=L^\rho_{{\cal{C}}}(\tau^e_3)$, $\rho_1{\le_\rho}\rho_2$ \emph{iff} $\rho_1{\in}{PREF}(\rho_2)$ \emph{iff} $\tau^e_1{\le^e_\C}\tau^e_2$.
	\end{enumerate}

\begin{remark}\label{rem:naming}
	Strictly speaking, bundles and transfers are complex structures, so that in principle we should distinguish between their definition in terms of declarative specifications of behaviours modifying a state of affairs and their unique names to be used in labeling. 
	This could be achieved by associating with each transfer $\theta$ its unique name $\overline{\theta}$ and with each bundle $\Theta$ its unique name $\overline{\Theta}$. 
	For the sake of simplicity we do not introduce this distinction here, relying on the context to clarify whether we are referring to names or to specifications.
\end{remark}

\section{Encoding transfers on ledgers}\label{sec:encoding}
In this section, we discuss how the model of contracts presented in Section~\ref{sec:automaton}, which views them as defining admissible evolutions of the state of affairs of a system of resources and actors, lends itself to the recording of sequences of contract-related actions on a ledger, making it amenable to forms of auditing on their compliance with the constraints set by the contract. As a consequence,  we do not simply deal with the ``current'' set of allocations, as maintained in traditional databases, but we aim at reconstructing the whole ``history'' of transfers involving resources and actors pertaining to a contract. 

Before moving on, it is important to clarify that we only consider the recording of actions on a ledger, while we are agnostic regarding the deployment of the contract which can itself reside on the ledger, as in some blockchains, rather than on a network node, as in other cases. In either case case the contract and its actions are at different levels and the focus here is on actions.

The \emph{allocation history} of a resource $r\in\mathbf{R}$ can then be reconstructed by considering, in the sequence of encodings of applied transfers, those for which $res(\theta)=r$.
We denote the set of sequences of (encodings of) transfers (irrespective of the contract for which it has been executed) in the ledger by $LT$. 
Each sequence $\sigma\in{LT}$ is called a \emph{ledger state}.
Note that $LT$ is \emph{prefix-complete} with respect to the standard prefix-order $\le$
(i.e., for any $\sigma\in{LT}$, $PREF(\sigma)\subseteq{LT}$).
%

In a centralized ledger, the recording of transfers has a natural correspondence with contract execution. 
In a distributed environment, however, as they can originate in different nodes, we cannot assume that a log state records transfers in the exact order in which they occurred.
We can assume, however, that a \emph{resource-safeness} (see Definition~\ref{def:properties}) property holds in every encoding sequence registered in the ledger.
That is, for $r\in\mathbf{R}$, a new transfer of $r$ is not logged if the previous transfer of $r$ has not yet been transferred to the ledger.
This property can be verified through a simple check during coding.
 
Besides resource-safeness, we consider other properties, 
providing the basis for some form of conformance-checking on the traces of transfers associated with a contract. 
Hence, compliance to a \emph{wallet-safeness} property amounts to requiring that actors can only use resources they are entitled to 
(e.g., no form of double spending can be encoded in the ledger); 
compliance to a \emph{bundle-safeness} property provides some form of transactionality 
(if we observe a transfer from a bundle $\Theta$ for some $r\in\mathbf{R}_{\cal{C}}$, then the next transfer for $r$ can only appear after $\Theta$ is completed); 
and compliance to a \emph{contract-safeness} property, with respect to a contract ${\cal{C}}$, means that the sequence of transfers encoded in the ledger is consistent with both partial orders, $\le_\beta$ and $\le_\rho$, induced by ${\cal{M}}^l_{\cal{C}}$ and ${\cal{M}}^e_{\cal{C}}$.

Definition~\ref{def:properties} provides a formal account of these properties. 
From now on, we introduce a notion of \emph{encoding} of a transfer $\theta=(r,k_1,k_2)$ on the ledger, noted $\varrho(\theta)$.
An encoding is a construct maintaining information about $r,k_1$, and $k_2$, together with some metadata, such as a timestamp for its creation time, 
the identifier of some validation authority, or of the contract to which it refers.

\begin{definition}[Properties of a ledger state]\label{def:properties}
Let $\mathbf{R}$ be a set of resources and let $\sigma=\langle\varrho({\theta}_1)\cdots\varrho({\theta}_k)\rangle$ be a ledger state, with $res(\theta_i)\in\mathbf{R}$ for $i\in\{1,\dots,k\}$. 
Then, we define:

\begin{itemize}[topsep=2pt,partopsep=1pt,itemsep=2pt,parsep=2pt]
	\item For $r\in\mathbf{R}$, $\sigma$ is $r$-\emph{safe} if: for any $j>i\in[k]$,  
	such that $\theta_i=(r,k_1,k_2),\theta_j=(r,k_3,k_4)$, 
	for some $k_3\neq{k_2}$, 
	$\sigma$ contains a sub-sequence $\langle\varrho(\theta_{l_1})\cdots\varrho(\theta_{l_m})\rangle$, $l_1>i,l_m\le{j}$, 
	such that, for $s\in\{1,\dots,m{-}1\}$, $\theta_{l_s}=(r,k_{l_s},k_{l_{s{+}1}})$, with $k_{l_1}=k_2,k_{l_{m}}=k_3$.

\item For a contract ${\cal{C}}$ (so that $\mathbf{R}=\mathbf{R}_{\cal{C}}$):
\begin{itemize}
	\item $\sigma$ is ${\cal{C}}$-\emph{wallet safe} if it is $r$-safe for any $r\in\mathbf{R}_{\cal{C}}$;
	
	\item $\sigma$ is ${\cal{M}}^l_{\cal{C}}$-\emph{bundle safe} if it is ${\cal{C}}$-wallet safe and, 
	for $\Theta\in\beta(E^l),\theta\in\Theta$, $\varrho(\theta)$ appearing in $\sigma$: no encoding $\varrho(\theta^\prime)$, with $res(\theta^\prime){=}res(\theta)$, appears in $\sigma$ before all transfers in $\Theta$ have been encoded in $\sigma$.
	
	\item $\sigma$ is ${\cal{M}}^l_{\cal{C}}$-\emph{contract safe} if it is ${\cal{M}}^l_{\cal{C}}$-bundle safe and it is a prefix of some transfer trajectory-labeling of ${\cal{M}}^l_{\cal{C}}$.
\end{itemize}
\end{itemize}
\end{definition}

\begin{proposition}\label{prop:properties}
Each of the properties in Definition~\ref{def:properties} is decidable.
\end{proposition}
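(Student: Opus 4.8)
The plan is to prove decidability by exhibiting, for each of the four properties, an effective procedure that terminates on any finite input $\sigma$ and returns the correct answer. The crucial observation that makes all four cases tractable is that a ledger state $\sigma = \langle\varrho(\theta_1)\cdots\varrho(\theta_k)\rangle$ is by definition a \emph{finite} sequence of encodings, and each property quantifies only over indices in $[k]$, over the finitely many resources in $\mathbf{R}$ (equivalently $\mathbf{R}_{\cal{C}}$), and over the finitely many bundles in $\beta(E^l)$ and transitions in $E^l$. Since $\mathbf{R}$, $\mathbf{K}$, $V^l$, and $E^l$ are all finite by the definitions in Section~\ref{sec:automaton}, every quantifier unfolds into a finite conjunction or disjunction of atomic conditions, each of which is a decidable equality or membership test on the components $r, k_1, k_2$ recorded in an encoding.

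First I would treat $r$-safeness for a fixed $r\in\mathbf{R}$. The defining condition is a bounded statement: for each pair $j>i$ in $[k]$ with $\theta_i=(r,k_1,k_2)$ and $\theta_j=(r,k_3,k_4)$ where $k_3\neq k_2$, one must verify the existence of a connecting sub-sequence of transfers of $r$ forming a chain from $k_2$ to $k_3$ strictly between positions $i$ and $j$. I would reduce this to reachability in a finite directed graph: restrict $\sigma$ to the sub-sequence of encodings whose transferred resource is $r$, read off the induced ``holder-transition'' edges in order, and check, for each offending pair, whether the intervening segment realises a path from $k_2$ to $k_3$. This is a finite graph-reachability check, hence decidable, and running over all $\binom{k}{2}$ pairs keeps the whole test within polynomial time. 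Then $\C$-wallet safeness is decided simply by conjoining $r$-safeness over the finitely many $r\in\mathbf{R}_{\cal{C}}$.

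Next I would handle $\M^l_{\cal{C}}$-bundle safeness, which presupposes $\C$-wallet safeness (already decided) and adds a transactionality constraint. For each bundle $\Theta\in\beta(E^l)$ and each $\theta\in\Theta$ whose encoding $\varrho(\theta)$ appears in $\sigma$, I would locate the first position at which all transfers of $\Theta$ have been encoded, and then scan $\sigma$ to confirm that no encoding $\varrho(\theta^\prime)$ with $res(\theta^\prime)=res(\theta)$ occurs before that completion position. Because $\beta(E^l)$ is finite, each bundle is a finite set, and position-comparison and resource-equality are decidable, this is again a finite, effective check. Finally, $\M^l_{\cal{C}}$-contract safeness requires, on top of bundle safeness, that $\sigma$ be a prefix of some transfer trajectory-labeling of $\M^l_{\cal{C}}$; I would decide this by noting that the set of initial trajectory-labelings is generated by the finite automata $\M^l_{\cal{C}}$ and $\M^e_{\cal{C}}$ (the latter itself finite by Construction~\ref{con:execution}), so ``being a prefix of some trajectory-labeling'' reduces to checking whether $\sigma$, read symbol by symbol, traces an admissible partial path in $\M^e_{\cal{C}}$ starting from the initial state --- a standard decidable acceptance-style question for finite-state machines.

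The step I expect to be the main obstacle is not any single decidability argument, all of which are elementary once finiteness is invoked, but rather making the reduction for contract safeness fully rigorous: I must be careful to connect the syntactic ``prefix of a transfer trajectory-labeling'' condition with the operational semantics of $\M^e_{\cal{C}}$ and the partial orders $\le_\beta, \le_\rho$ of Definition~\ref{def:transferLabeling}, ensuring that the finite-automaton traversal I describe genuinely characterises prefixes of labelings and does not spuriously accept or reject because of the interleaving subtleties introduced in Construction~\ref{con:execution}. Handling the interaction between intermediate ($Lin(\eta)$) states and the completion of bundles is where the bookkeeping is delicate; everything else follows from the finiteness of the ambient sets and the decidability of equality and membership on encodings.
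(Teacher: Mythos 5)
Your proof is correct, and it rests on the same two pillars as the paper's own sketch: the finiteness of a ledger state and the finite-state nature of the contract machinery. The route differs in its decomposition, however. The paper's proof is uniform: it extracts from $\sigma$ the sub-sequence $\sigma_{\cal{C}}$ of transfers relative to ${\cal{C}}$ and settles every property at once by comparing $\sigma_{\cal{C}}$ against the transfer trajectory-labelings in $\mathbf{L}^{\rho,in}_{\cal{C}}$, of which only finitely many need be considered up to length $|\sigma_{\cal{C}}|$ thanks to prefix-completeness; in particular, contract safeness is characterised as membership of the extracted sequence in $\mathbf{L}^{\rho,in}_{\cal{C}}$. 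You instead give a dedicated procedure per property: $r$-safeness and wallet safeness are decided purely on the sequence itself (an ordered chain check over the $r$-restricted sub-sequence), bundle safeness by a positional scan against the finite set $\beta(E^l)$, and only contract safeness invokes the automaton, via on-the-fly simulation of ${\cal{M}}^e_{\cal{C}}$ rather than comparison with an enumerated family of labelings. Your version buys two things: it makes explicit that the first two properties are decidable with no reference to any contract automaton (which matches Definition~\ref{def:properties}, where $r$-safeness is stated for an arbitrary $\mathbf{R}$), and the symbol-by-symbol simulation yields polynomial-time checks where enumerate-and-compare could be exponential in $|\sigma_{\cal{C}}|$; the paper's version buys brevity and a single characterisation covering all cases. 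Two small points of bookkeeping: for bundle safeness, when a bundle is never completed in $\sigma$ its ``completion position'' must be treated as lying beyond the end of $\sigma$, so that \emph{any} later same-resource encoding is a violation --- your scan accommodates this but you should say so; and your closing caveat about the $Lin(\eta)$ interleavings is well placed (it is precisely what the paper's ``it is easy to see'' glosses over), yet it does not threaten decidability, since even if the traversal of ${\cal{M}}^e_{\cal{C}}$ had to account for choices among simultaneously progressing bundles, prefix acceptance for a finite-state machine remains decidable by exhaustive search.
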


\begin{proof}[Sketch]
It is easy to see that by inspecting the sequence $\sigma_\C$, of transfers relative to resources in some contract $\C$, extracted from a ledger state $\sigma$, and comparing it with the sequences of transfers in $\mathbf{L}^\rho_{{\cal{C}}}$ 
(using the induced contract execution automaton ${\cal{M}}^e_{\cal{C}})$, each of the considered properties can be assessed. 
In particular, a ledger state is ${\cal{M}}_{\cal{C}}$-contract safe \emph{iff} its extracted sequence for resources in $\mathbf{R}_{\cal{C}}$ is in $\mathbf{L}^{\rho,in}_{{\cal{C}}}$. 
Since a ledger state is constituted of a finite number of transfers,
the extracted sequence $\sigma_\C$ needs to be compared with a finite number of transfer trajectory-labelings\footnote{Remember that $\mathbf{L}^{\rho,in}_{{\cal{C}}}$ is prefix-complete.} up to the length $|\sigma_\C|$.
\end{proof}

The proof is then immediate for Corollary~\ref{cor:safePrefix}.

\begin{corollary}\label{cor:safePrefix}
All prefixes of a ${\cal{M}}^l_{\cal{C}}$-contract safe state are ${\cal{M}}^l_{\cal{C}}$-contract safe.
\end{corollary}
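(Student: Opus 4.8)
The plan is to reduce the claim to the prefix-completeness of the set $\mathbf{L}^{\rho,in}_{\C}$ of transfer trajectory-labelings of initial trajectories, exploiting the characterisation of contract-safeness already established in the proof of Proposition~\ref{prop:properties}. That proof shows that a ledger state $\sigma$ is ${\cal{M}}^l_{\C}$-contract safe \emph{iff} its extracted subsequence $\sigma_\C$ (the transfers in $\sigma$ whose resource lies in $\mathbf{R}_\C$, taken in the order in which they appear) belongs to $\mathbf{L}^{\rho,in}_{\C}$. I would take this equivalence as the working formulation of contract-safeness.

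First I would fix a ${\cal{M}}^l_{\C}$-contract safe state $\sigma$ and an arbitrary prefix $\sigma'=pref(\sigma,j)$. The key combinatorial observation is that extraction and truncation commute: since both restricting to the resources of $\C$ and cutting the sequence at position $j$ preserve the left-to-right order and neither reorders nor inserts elements, the extracted sequence $\sigma'_\C$ is exactly a prefix of $\sigma_\C$, i.e. $\sigma'_\C\in PREF(\sigma_\C)$. This is the one step that needs an explicit (if routine) argument, and it is where I expect the only real care to be required: one must check that no contract-relevant transfer surviving in $\sigma'$ can be reordered relative to $\sigma_\C$.

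Next, from contract-safeness of $\sigma$ we have $\sigma_\C\in\mathbf{L}^{\rho,in}_{\C}$. Since $\mathbf{L}^{\rho,in}_{\C}$ is prefix-complete (as recorded in the footnote to the proof of Proposition~\ref{prop:properties}) and $\sigma'_\C\in PREF(\sigma_\C)$, it follows that $\sigma'_\C\in\mathbf{L}^{\rho,in}_{\C}$. Applying the characterisation in the other direction then yields that $\sigma'$ is ${\cal{M}}^l_{\C}$-contract safe, which is the claim.

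As an alternative, more hands-on route, useful as a sanity check, I would unfold Definition~\ref{def:properties} directly and verify that each of the three ingredients of contract-safeness is inherited by prefixes: being a prefix of a transfer trajectory-labeling is immediate from transitivity of the prefix order; $r$-safeness for each $r$ is preserved because the connecting subsequence witnessing a pair $j>i$ has $l_m\le j$ and therefore lies entirely inside any prefix containing position $j$; and the bundle condition is preserved by a short case split on whether the bundle $\Theta$ is completed inside the prefix (if it is, use bundle-safeness of $\sigma$ directly; if it is not, the relevant ``before completion'' window for $\sigma'$ is contained in that for $\sigma$, so again no conflicting transfer of the same resource can appear). The first route via prefix-completeness is the cleaner one and matches the remark that the proof is immediate.
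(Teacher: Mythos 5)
Your proof is correct and takes essentially the same route as the paper, which declares the corollary ``immediate'' from the proof of Proposition~\ref{prop:properties}, i.e., exactly your primary argument: contract-safeness is characterised by membership of the extracted sequence in $\mathbf{L}^{\rho,in}_{\cal{C}}$, extraction commutes with truncation, and $\mathbf{L}^{\rho,in}_{\cal{C}}$ is prefix-complete. Your explicit commutation step and the alternative direct unfolding of Definition~\ref{def:properties} merely spell out details the paper leaves implicit.
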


In order to define the set of sequences of transfers encoded in a ledger, corresponding to possible contract executions, we have to go back to the considerations in Section~\ref{sec:automaton} on the possibility that some bundles do not get completed, so that some of the transfers in them appear in a labeling in $\mathbf{L}^{\rho,in}_{\cal{C}}$, but not in any linearisation of a labeling in  $\mathbf{B}^{\beta,in}_{\cal{C}}$.  

In particular, resuming the arguments in Construction~\ref{con:execution}, for each state $v\in{V}^l$, 
the determinism of ${\cal{M}}^l_{\cal{C}}$ allows us to establish a bijection between the set $T(v)$, 
of transitions leaving $v$, and the set
$Bun(v)=\{\beta(\eta)\mid\eta\in{T}(v)\}$ of bundles labeling these transitions.
%
%
Then $\mbox{\emph{Trf}}(v)=\bigcup{Bun}(v)$ is the set of transfers in all the bundles labeling transitions in $T(v)$. 
Let $\sigma$ be a ledger state such that its last encoded transfer ``completes" a bundle in $Bun(v)$, leading to a state $v^\prime\in{V}^l$. No bundle which has not been completed in $\sigma$ can then be completed in any prolongation of $\sigma$.
As a consequence, $\sigma$ comprises a sparse subsequence $\sigma^\prime$ of (encodings of) transfers which are parts of the transfer trajectory-labeling for a trajectory $\tau\in\Pi^e_{{\cal{C}}}$ leading to $v^\prime$, and a sparse subsequence $\sigma^{\prime\prime}$ of  (encodings of) transfers not on this trajectory, and consequently, not on any trajectory which is a prolongation of $\tau$. 
We say that  $\sigma^\prime$ is ``useful" and that  $\sigma^{\prime\prime}$  is ``useless". 
The reasoning can be extended to sequences which proceed beyond the last completed bundle, i.e., after reaching a state $v\in{V}^l$. 
Let $\sigma$ be one such sequence and let $\sigma_v$ the suffix of $\sigma$ following the completion of a bundle leading to $v$.
Then, if a prolongation of $\sigma_v$, say $\sigma^\prime_v$, leads to a new state $v_2$, the transfers in $\beta((v,v_2))$ will be incorporated in the useful part of $\langle\sigma\cdot\sigma^\prime_v\rangle$, while the remaining transfers in $\sigma^\prime_v$ will be incorporated in its useless part.

\section{An algebraic model of contracts and its logic}\label{sec:categoricalModel}
We can now go and resume our program of constructing a suitable logic for ledgers and contract automata based on 
their associated algebraic structures.
The presented theory is inspired by, but not immediately reducible to, a general theory introduced in categorical terms in~\cite{KL99}, used in~\cite{BGKL18} to provide a logic for (possibly nondeterministic) processes, and adapted to the study of contracts in~\cite{BGKL19}. 
In fact, the idea originated from the natural association of a Heyting first-order logic to a category of generalised labeled trees, proven to be a Heyting category and extended with several modal/temporal operators in~\cite{BGKL18}. 
There, the labeling of trees via a meet-semilattice was a crucial device, since we had to deal with a non-deterministic situation, 
which we modeled by allowing two different paths to be labeled via the same element in the meet-semilattice.

When, as here, one has a deterministic situation 
(i.e., one path, either in ${\cal{M}}^l_{\cal{C}}$ or in ${\cal{M}}^e_{\cal{C}}$ can only have a unique label), 
the labeling machinery can be dropped to consider the meet-semilattice itself as a tree, and the theory can be expressed in (po)set-theoretical terms. 
Indeed, the structure associated with the set of subtrees of a tree $\X$ is now that of a boolean algebra, so that, by using subtrees as interpretations of formulas,a boolean logic is obtained, (extendable with modal/temporal operators, see~\cite{BGKL18}), in analogy with the standard interpretation of formulas in terms of subsets of some universe of discourse. 

In our approach, this kind of structure can be associated both with a ledger used to register (encodings of) transfers performed 
under the constraints set by (possibly more than) one pair of contract automata, the legal and the execution one, and with the behaviour of the contract automata themselves (or more of them).
As a consequence, we are interested in two (or more) meet-semilattices at the same time, one derived from the sequence of transfers recorded in the ledger, the other ones derived from the admissible initial sequences of bundles (transfers) in a bundle (transfer) trajectory-labeling for some contract (or contracts) whose executions are recorded in the ledger.
 
 Without loss of generality, we restrict ourselves to one ledger and one contract with its two automata.
 The involved meet-semilattices correspond to sorts in a category canonically associated with a classical many-sorted logic, namely,  
 the category $\mathbf{TSL}$ of trees derived from meet-semilattices, with monotonic functions as morphisms.
 Indeed, for every object $\X$ of $\mathbf{TSL}$, $Sub(\X)$ is a Boolean algebra, and 
 morphisms allow a canonical definition of quantifiers, as shown later. 
 The logic associated with $\mathbf{TSL}$ will be expanded with modal/temporal operators, again defined canonically from the order relation on paths. 
 
Given a finite alphabet  of transactions $T$, the monoid $T^*$, of freely-generated sequences of transactions, is canonically defined. 
This is a meet-semilattice as well as a tree (rooted in the empty sequence $\epsilon$)  in our sense; all the prefixes of elements in $T^*$ are still in $T^*$. 

However, due to the condition of resource-safeness, 
not all possible sequences in $T^*$ can actually constitute the record, in a ledger, of a sequence of transfers.
Hence, the tree of possible ledger states
forms a prefix-closed proper subtree of $T^*$, $LT$ (as per the discussion in Section~\ref{sec:encoding}).  
When a new transfer is registered on the ledger, 
this causes the selection in $LT$ of all and only those paths which present that transfer at the corresponding step,
and which are like-wise consistent at all previous steps,
thus eliminating all the paths which are not its prolongations from the possible evolutions of the ledger. 
This is precisely the novelty in this approach: that we consider a special subfamily of $Subtree(LT)$, namely what we call the \emph{evolutions} of the ledger.
If we take $\E_0=LT$ to define the set of ledger states which are still reachable at time $0$, 
by repeating this operation any number of times $n$, we obtain a sequence of 
(\emph{instantaneous}) \emph{evolutions}
$\langle\E_0,\E_1,\dots,\E_n\rangle$ (a subset of $Subtree(LT)$), with $\E_{i+1}\subseteq{\E}_i$, for $i\in\{0,\dots,n{-}1\}$. 

\begin{remark}
	We observe that, in general, one has  $\E_{i+1}\subsetneq{\E}_i$ 
	The case $\E_{i+1}={\E}_i$ occurs only for $i=n{-}1$, in a situation where, after applying all transfers in $\E_i$, the only allocation of the form $(r,k)$, for $k{\neq}\bot$, in the resulting state of affairs is such that $r\in\mathbf{Ev}$, so that the only possible transfer left at step $i{+}1$ is $(r,\top,\bot)$. 
\end{remark}

In this perspective, the 
overall
ledger evolution is modeled as a sequence of trees on $(T^*,\leq,\wedge,\lambda)$, each containing the following one in the sequence:
after having established a finite set of transfer records, 
only those paths 
which are prolongations of it
are selected to generate the prefix-complete tree $\E_t$, representing the instantaneous evolution at time $t$.
Figure~\ref{fig:evolution} presents an intuitive representation of this sort of pruning, occurring from one step to the next.

\begin{figure}[htb]
	\centering
\includegraphics[width=9cm]{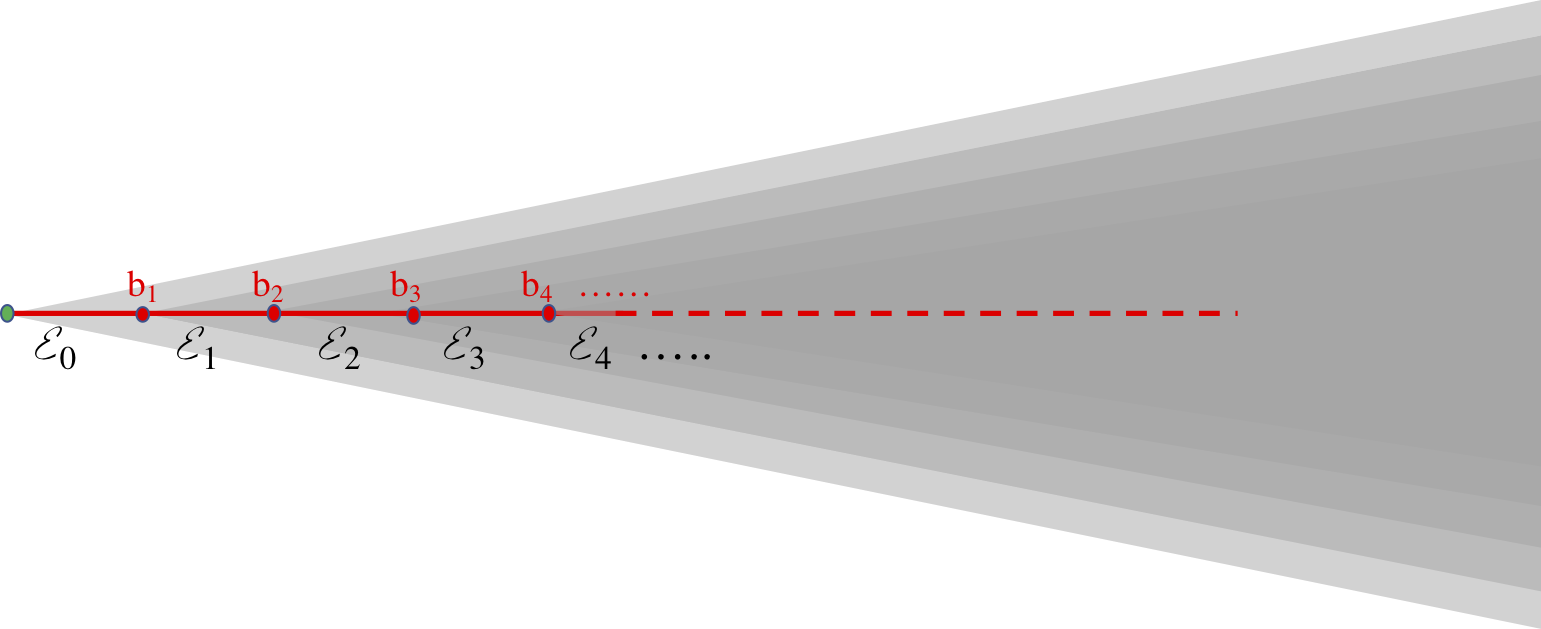}
\caption{An evolving ledger.}\label{fig:evolution}
\end{figure}
 
The resulting chain of instantaneous evolutions will in turn represent the global evolution.
Every instantaneous evolution $\E_k$ contains an \emph{initial chain} of paths
(the \emph{established} part coloured in red in Figure~\ref{fig:evolution}), of length $k$.

On the other hand,  we can easily see, on the basis of what discussed in Section~\ref{sec:automaton}, that both the set of all bundle initial trajectory-labelings in $\Pi^{l,in}_C$, $\mathbf{B}^{\beta,in}_C$, and the set of all transfer initial trajectory-labelings in $\Pi^{e,in}_C$, $\mathbf{L}^{\rho,in}_C$, with their partial orderings, are semilattices (and trees).

We will now define an \emph{occurring function} for each of them: 
namely $\nu^l:LT\rightarrow \mathbf{B}^{\beta,in}_C$ 
(resp. $\nu^e:LT\rightarrow \mathbf{L}^{\rho,in}_C$).
Intuitively, given a contract $\C$, an occurring function extracts, from a given initial sequence $\sigma$ of transfers recorded on the ledger, the subsequence $\sigma^\prime$ of those 
%
%
relative to  $\C$, 
and associates with it 
%
%
the maximal initial trajectory of states in ${\cal{M}}^l_{\C}$ (resp. ${\cal{M}}^e_{\C}$) reached through $\sigma^\prime$. 

\begin{definition}[Occurring maps]\label{def:occurringFunction}
	Let $\C$ be a contract. Then, for any $\sigma\in{LT}$:
	\begin{enumerate} 
		%
		\item Let $\overline{\tau^l_\sigma}$ be the maximal trajectory in $\Pi^{l,in}_\C$ for which there exists a monotonic function from $B^\beta_{{\cal{C}}}(\overline{\tau^l_\sigma})$ into $\sigma$. 
		Then the map $\nu^l_\C:LT\rightarrow\mathbf{B}^{\beta,in}_C$, defined
		by $\nu^l_\C(\sigma)=B^\beta_\C(\overline{\tau^l_\sigma})$ is called the \emph{bundle occurring map} for ${\cal{M}}^l_\C$.
		%
		%
		\item Let $\overline{\tau^e_\sigma}$ be the maximal trajectory in $\Pi^{e,in}_\C$ for which there exists a monotonic function from $L^\rho_{{\cal{C}}}(\overline{\tau^l_\sigma})$ into $\sigma$. 
		Then the map $\nu^e_\C:LT\rightarrow\mathbf{L}^{\rho,in}_C$, defined
		by $\nu^e_\C(\sigma)=L^\rho_\C(\overline{\tau^e_\sigma})$ is called the \emph{transfer occurring map} for ${\cal{M}}^e_\C$.
	\end{enumerate} 
\end{definition}

%
``Maximal'' here means with respect to the ordering defined on trajectories, while the existence of a monotonic function from $B^\beta_{{\cal{C}}}(\overline{\tau^l})$ (from $L^\rho_\C(\overline{\tau^e})$) into $\sigma$ implies that the trajectory $\overline{\tau^l_\sigma}$ ($\overline{\tau^e_\sigma}$) exploits the useful part of $\sigma$.

\begin{theorem}\label{th:monotonic}
	%
	%
	Both $\nu^l_\C{:}LT{\rightarrow}\mathbf{B}^{\beta,in}_C$ and $\nu^e_\C{:}LT{\rightarrow}\mathbf{L}^{\rho,in}_C$
	are monotonic functions.
	%
	%
\end{theorem}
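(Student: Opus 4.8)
The plan is to reduce monotonicity of each occurring map to monotonicity of the underlying assignment $\sigma\mapsto\overline{\tau_\sigma}$ of a maximal initial trajectory, and then to transport the result along the order equivalences recorded just after Definition~\ref{def:transferLabeling}. I treat $\nu^l_\C$ first and indicate at the end why the argument for $\nu^e_\C$ is identical.

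Take $\sigma_1,\sigma_2\in LT$ with $\sigma_1\le\sigma_2$, i.e. $\sigma_1\in PREF(\sigma_2)$. The first step is to observe that the prefix inclusion $\sigma_1\hookrightarrow\sigma_2$ is itself a monotonic map of the corresponding prefix-ordered posets: every prefix of $\sigma_1$ is a prefix of $\sigma_2$, and the order is preserved. By definition of $\nu^l_\C$ there is a monotonic function from $B^\beta_\C(\overline{\tau^l_{\sigma_1}})$ into $\sigma_1$; composing it with this inclusion yields a monotonic function from $B^\beta_\C(\overline{\tau^l_{\sigma_1}})$ into $\sigma_2$. Hence $\overline{\tau^l_{\sigma_1}}$ is one of the trajectories competing to be $\overline{\tau^l_{\sigma_2}}$.

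The second, and crucial, step is to argue that among all initial trajectories admitting a monotonic function into a fixed $\sigma$ there is a \emph{largest} one with respect to $\le^l_\C$, and that every competitor is a prefix of it, so that the ``maximal trajectory'' in Definition~\ref{def:occurringFunction} is genuinely a maximum. This is where the determinism of ${\cal M}^l_\C$ does the work: by the analysis of useful and useless parts in Section~\ref{sec:encoding}, the useful subsequence of $\sigma$ relative to $\C$ is uniquely determined, and, because no two transitions out of a common state carry comparable bundle labels, the sequence of completed bundles read off the useful part selects a unique sequence of states, hence a unique initial trajectory. Every trajectory that exploits the useful part of $\sigma$ therefore coincides with an initial segment of this one, so the candidates form a chain and the maximum exists and is unique. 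Combining this with the first step gives $\overline{\tau^l_{\sigma_1}}\le^l_\C\overline{\tau^l_{\sigma_2}}$.

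Finally, I transport this inequality through the labeling. By item~2 in the list following Definition~\ref{def:transferLabeling}, $B^\beta_\C(\overline{\tau^l_{\sigma_1}})\le_\beta B^\beta_\C(\overline{\tau^l_{\sigma_2}})$ holds precisely because $\overline{\tau^l_{\sigma_1}}\le^l_\C\overline{\tau^l_{\sigma_2}}$, which is exactly $\nu^l_\C(\sigma_1)\le_\beta\nu^l_\C(\sigma_2)$. The same three steps prove monotonicity of $\nu^e_\C$, now using the prefix inclusion into $\sigma_2$, the determinism of ${\cal M}^e_\C$ (each intermediate state has at most one outgoing transition per action, so the useful part again fixes a unique trajectory), and item~4 of the same list, which ties $\le_\rho$ to $\le^e_\C$. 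I expect the main obstacle to be the second step: pinning down exactly why the set of admissible trajectories is totally ordered, which requires a careful use of determinism together with the observation from Section~\ref{sec:encoding} that no bundle left incomplete at the current state can ever be completed later.
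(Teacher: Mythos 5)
Your proof is correct and takes essentially the same approach as the paper's: both arguments rest on the determinism of ${\cal{M}}^l_\C$ (resp.\ ${\cal{M}}^e_\C$) forcing a unique trajectory consistent with the uniquely determined useful part of $\sigma$, together with the observation that prolonging $\sigma$ cannot lose that useful part, so the associated trajectory can only extend or stay put. Your explicit second step --- showing that the candidate trajectories for a fixed $\sigma$ form a chain, so that the ``maximal'' trajectory of Definition~\ref{def:occurringFunction} is a genuine maximum, and then deriving monotonicity from maximality after composing the witness with the prefix inclusion $\sigma_1\hookrightarrow\sigma_2$ --- is a welcome tightening of the paper's more informal remark that ``if $\sigma$ increases, its useful part cannot decrease,'' but it is a refinement of the same argument rather than a different route.
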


\begin{proof}  
	To prove that $\nu^l_\C$ is a function, we need to prove that $\nu^l_\C(\sigma)$ is uniquely determined. 
	Indeed, consider the first transfer in $\sigma$ completing a bundle, say the bundle $\overline{b}$; then $\overline{b}$ is the first label in the bundle trajectory labeling for the 
	%
	%
	(only) trajectory $\tau^\prime$ consistent with $\sigma$. 
	In the same way, proceeding in $\sigma$ to record the bundles progressively completed by the transfers in $\sigma$ 
	(remember that a bundle can only be completed after reaching a state in which a transition it labels is allowed)
	one obtains the bundle trajectory-labeling of $\tau^\prime$. 
	But now, due to the automaton determinism, $\tau^\prime=\overline{\tau^l_\sigma}$.
	%
	%
	Hence we get a monotonic function from $B^\beta_{{\cal{C}}}(\overline{\tau^l_\sigma})$ into $\sigma$, which selects that part of $\sigma$ which is the linearisation of the bundles in the bundle trajectory-labeling for $\overline{\tau^l_\sigma}$ (the useful subsequence).
	Since ${\cal{M}}_\C$ is deterministic,	$\overline{\tau^l_\sigma}$ is uniquely determined. 
	The proof of the monotonicity of $\nu^l$ is immediate: if $\sigma$ increases, its useful part cannot decrease and it is possible either to reach a further state in the automaton (due to determinism, divergence is impossible) or to remain on the same state. Transfers discarded in the procedure belong to the useless part of $\sigma$.
	The part of the proof relative to $\nu^e_\C$ is trivial, since this map directly relates two sequences of transfers.
\end{proof} 

Being $\nu^l$ and $\nu^e$ monotonic, a smooth correspondence is established between subtrees of $LT$ and 
%
%
of $\mathbf{B}^{\beta,in}_C$, as well as between subtrees of $LT$ and 
of $\mathbf{L}^{\rho,in}_C$.
%
%
%
A monotonic function $\nu^l_e:\mathbf{L}^{\rho,in}_C\rightarrow\mathbf{B}^{\beta,in}_C$ also exists, factorising $\nu^e$ through $\nu^l$.
 
In this model the tree associated with the ledger will also play the role 
of ``making time tick'' with each registration of a new transfer; the present instant at time $t$ is represented by the evolution $\E_t$.

Let us investigate more formally the underlying theory: for a given $\X$ which is an object of $\mathbf{TSL}$, the system $Subtree(\X)$ is seen as a boolean algebra, obtained by equipping the subtrees of $\X$ with natural inclusions between them. 
Hence, we can think of a subtree as of the interpretation of some 
 %
 %
formula.
In this acception, a subtree which is the interpretation of a  
%
%
formula 
$\phi$ is denoted by $\llbracket\phi\rrbracket$.
 Starting from atomic  
 %
 %
 formulas, 
more complex ones are interpreted via the boolean operators present in $Subtree(\X)$.
 
 We thus obtain a notion of satisfiability, 
 %
%
by a path $\pi$, of a logical formula $\phi$:
\begin {itemize}[topsep=2pt,partopsep=1pt,itemsep=2pt,parsep=2pt]
\item $\pi\models_{\X}\phi$ \emph{iff} 
$\pi\in\llbracket\phi\rrbracket$\footnote{Read: $\pi$ satisfies $\phi$ if and only if $\pi$ belongs to the interpretation of $\phi$.}
\item $\pi\models_{\X}\phi\wedge\psi$ \emph{iff} $\pi\in\llbracket\phi\rrbracket\cap\llbracket\psi\rrbracket$
\item $\pi\models_{\X}\phi\vee\psi$ \emph{iff} $\pi\in\llbracket\phi\rrbracket\cup\llbracket\psi\rrbracket$
\item $\pi\models_{\X} \phi\wedge\neg\psi$ \emph{iff} $\pi\notin\llbracket\phi\rrbracket$.
\end {itemize}
 
This logical structure is inherited by every subtree, so that we can relativise the interpretation of a formula to any given subtree, by simply taking the intersection between the extension of the formula and the subtree.

Thus, we have a rigorous tool to vary our satisfiability relation according to the evolution of a ledger, because inclusions between subtrees smoothly translate satisfiability at step $t$ to satisfiability at step $t^\prime<t$.

 With every instantaneous evolution ${\E_t}$ an atomic formula ${\Phi_t}$ is associated. 
 Such 
 %
 %
 a
 formula is taken to mean 
 %
 %
 %
``the ledger state appears in (belongs to) ${\E_t}$''. 
%
 %
For $\E$ an instantaneous evolution (i.e., a tree) and $\phi$ an atomic formula,  
 $\llbracket{\phi}\rrbracket_{{\E}}$ denotes the subtree of ${\E}$ providing the interpretation of $\phi$ in $\E$.
 
 For the set of logical operators above (as well as for the temporal ones later on) one might define a separate notion of satisfiability for every instantaneous evolution 
 %
 %
 $\E_t$. 
 However, since each $\E_t$-interpretation of a formula $\phi$ can be embedded into $\E_0$ by taking the $\E_0$ interpretation of $\phi\wedge{\Phi}_t$ (see Definition~\ref{def:viewSatisfiability}), we prefer to adopt this convention and avoid a useless proliferation of operators.
 %
 
\begin{definition}[Satisfiability in evolutions]\label{def:viewSatisfiability}
	%
	Let 
	%
	%
	$\phi$ be a formula, and let $\sigma$ be a 
	%
	%
	ledger state in ${\E_0}$. 
	We say that $\sigma$ \emph{satisfies} $\phi$ in ${\E_0}$, noted $\sigma\models_{{\E_0}}\phi$,
	%
	%
	according to the following, by induction on the structure of $\phi$:
\begin {itemize}[topsep=2pt,partopsep=1pt,itemsep=2pt,parsep=2pt]
\item $\sigma\models_{{\E_0}}\phi$ \emph{iff} $\sigma\in\llbracket\phi\rrbracket_{{\E_0}}$, 
\item $\sigma\models_{{\E_0}}\phi\wedge\psi$ \emph{iff} $\sigma\models_{{\E_0}}\phi$ and $\sigma\models_{{\E_0}}\psi$;
\item $\sigma\models_{{\E_0}}\phi\vee\psi$ \emph{iff} $\sigma\models_{{\E_0}}\phi$ or $\sigma\models_{{\E_0}}\psi$;
\item $\sigma\models_{{\E_0}}\phi\implies\psi$ \emph{iff} $\sigma\models_{{\E_0}}\phi$ implies $\sigma\models_{{\E_0}}\psi$
\item $\sigma\models_{{\E_0}}\neg\phi$ \emph{iff} it is not the case that $\sigma\models_{{\E_0}}\phi$.
\end {itemize}
We say that $\sigma$ \emph{satisfies} $\phi$ in ${\E_t}$ (i.e., at step $t$), 
noted $\sigma\models_{\E_t}\phi$, 
as follows:
\begin{itemize}[topsep=2pt,partopsep=1pt,itemsep=2pt,parsep=2pt]
\item $\sigma\models_{{\E_t}}\phi$ \emph{iff} $\sigma\models_{{\E_0}}\phi\wedge\Phi_t$.
\end {itemize}
\end{definition}

A similar logical structure can be produced for the meet-semilattices $\mathbf{B}^{\beta,in}_C$ and $ \mathbf{L}^{\rho,in}_C$, so that all the logical operators considered in this section for $LT$, are also definable in $\mathbf{B}^{\beta,in}_\C$ and 
$\mathbf{L}^{\rho,in}_C$.

We now investigate the relationships between these algebraic structures and their associated logics.
We recall that, for a tree $\X$ which is an object in $\mathbf{TSL}$, 
$Subtree(\X)$ is a boolean algebra.  
Theorem~\ref{th:baseChange} unveils a more stringent structure.

\begin{theorem}\label{th:baseChange} 
	Let $\X$ and $\Y$
	be two trees and let $\mu:\X\rightarrow\Y$ be a monotonic function.
	Then, the following hold:
\begin{itemize}[topsep=2pt,partopsep=1pt,itemsep=2pt,parsep=2pt]
\item There exists a monotonic function $\mu^{-1}:Subtree(\Y)\rightarrow{Subtree}(\X)$\footnote{The function $\mu$ is called a \emph{substitution}.}.
	\item The left and the right adjoints to $\mu^{-1}$ exist, namely existential and universal quantifier ($\exists_{\mu}$ and $\forall_{\mu}$).
	\item $\mu^{-1}$ preserves all algebraic operators.
	%
\end{itemize}
\end{theorem}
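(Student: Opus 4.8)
The plan is to carry out everything at the level of subtrees regarded as subsets. By Proposition~\ref{prop:Boole}, for any tree $\Z$ the poset $Subtree(\Z)$ is a boolean algebra; since by Definition~\ref{def:LTrees} an $\mathbf{L}$-tree is determined exactly by its set of paths, I would identify $Subtree(\Z)$ with the powerset $\wp(Z)$ ordered by inclusion, with meet $\cap$, join $\cup$, complement taken relative to $Z$, top $Z$, and bottom $\emptyset$. The key payoff of this identification is that \emph{every} subset of $Z$ is a subtree, so the set-operations introduced below are guaranteed to produce genuine elements of $Subtree(\Z)$ with no closure check required.

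First I would define the substitution $\mu^{-1}\colon Subtree(\Y)\rightarrow Subtree(\X)$ as the preimage operator: for a subtree $\B$ of $\Y$, set $\mu^{-1}(\B)=\{x\in X\mid\mu(x)\in \B\}$. This is a subset of $X$, hence a subtree of $\X$, and monotonicity is immediate, since $\B_1\subseteq\B_2$ yields $\mu^{-1}(\B_1)\subseteq\mu^{-1}(\B_2)$. (The monotonicity of $\mu$ itself is not needed for this set-level map; it only serves to make $\mu$ a morphism of $\mathbf{TSL}$.)

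Next, for the two adjoints I would exhibit explicit candidates and verify the biconditional of Definition~\ref{def:adjoints}. As left adjoint (the existential quantifier) take the direct image $\exists_\mu(\A)=\{\mu(x)\mid x\in \A\}$: for $\A$ a subtree of $\X$ and $\B$ a subtree of $\Y$ one has $\exists_\mu(\A)\subseteq \B$ iff $\mu(x)\in \B$ for all $x\in \A$ iff $\A\subseteq\mu^{-1}(\B)$, that is, $\exists_\mu\dashv\mu^{-1}$. As right adjoint (the universal quantifier) take the dual image $\forall_\mu(\A)=\{y\in Y\mid\mu^{-1}(\{y\})\subseteq \A\}$: then $\B\subseteq\forall_\mu(\A)$ iff every $\mu$-preimage of every $y\in \B$ lies in $\A$ iff $\mu^{-1}(\B)\subseteq \A$, that is, $\mu^{-1}\dashv\forall_\mu$. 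Both images are subsets of $Y$, hence subtrees of $\Y$, so the quantifiers are well-defined, and by Fact~\ref{fac:adjoints}(1) they are the unique such adjoints. I expect the verification for $\forall_\mu$ to be the point demanding the most care: the universal quantifier is the least transparent of the three maps, and it is exactly where the powerset identification does the work of keeping us inside $Subtree(\Y)$.

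Finally, for preservation of the algebraic operators I would argue in the paper's idiom. Having produced both a left and a right adjoint, $\mu^{-1}$ is a two-sided adjunction; by Fact~\ref{fac:adjoints}(3) it is then a \emph{smooth translation}, and concretely: as a right adjoint (of $\exists_\mu$) it preserves $\wedge$ and $\top$, while as a left adjoint (of $\forall_\mu$) it preserves $\vee$ and $\bot$. Preservation of complement is then free in a boolean algebra, since $\mu^{-1}(\neg\B)\wedge\mu^{-1}(\B)=\mu^{-1}(\bot)=\bot$ and $\mu^{-1}(\neg\B)\vee\mu^{-1}(\B)=\mu^{-1}(\top)=\top$ force $\mu^{-1}(\neg\B)$ to be the complement of $\mu^{-1}(\B)$. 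One could equally verify directly that preimage commutes with $\cap$, $\cup$, and relative complement, but the adjoint route reuses the machinery already set up and yields the stated \emph{smooth translation} for free.
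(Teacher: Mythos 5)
Your proposal is correct and follows essentially the same route as the paper's own sketch: the substitution $\mu^{-1}$ as preimage, $\exists_\mu$ as direct image (left adjoint), $\forall_\mu$ as the dual image (your $\{y\in Y\mid \mu^{-1}(\{y\})\subseteq \A\}$ is exactly the paper's $\{\eta\in{Y}\mid(\forall{\pi}\in{X})[\mu(\pi)=\eta\mbox{ implies }\pi\in{X}^\prime]\}$), and preservation of the operators deduced from the two-sided adjunction via Fact~\ref{fac:adjoints}. You merely spell out the adjunction biconditionals and the complement-preservation step that the paper leaves implicit.
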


\begin{proof}(Sketch)
	\begin{itemize}[topsep=2pt,partopsep=1pt,itemsep=2pt,parsep=2pt]
	\item We define $\mu^{-1}$ as the function such that, for $\Y^\prime$ a subtree of $\Y$, $\mu^{-1}(\Y^\prime)=\{\pi\in{X}\mid\mu(\pi)\in{Y}^\prime\}$, which is a subtree of $\X$. 
	Monotonicity is immediate.
	\item The operator $\exists_{\mu}$ is defined, for $\X^\prime$ an object of $\mathbf{TSL}$, by $\exists_{\mu}(\X^\prime)=\{\eta\in{Y}\mid(\exists{\pi}\in{X}^\prime)[\mu(\pi)=\eta]\}$. 
	This is the left adjoint to $\mu^{-1}$, while the operator $\forall_{\mu}$ is defined, for $\X^\prime$ an object of $\mathbf{TSL}$, by
$\forall_{\mu}(\X^\prime)=\{\eta\in{Y}\mid(\forall{\pi}\in{X})[\mu(\pi)=\eta\mbox{ implies }\pi\in{X}^\prime]\}$, which is the right adjoint to $\mu^{-1}$.
%
\item Preservation of operators is a consequence of having left and right adjoints.
\end{itemize}
As all items have been proven, this concludes the proof.
\end{proof}

Hence, we can smoothly translate formulas
in one logic to 
formulas 
in the other one,
the application of $\mu^{-1}$ 
maintaining their syntactical form. 

As we define our logic on a tree-shaped model with a discrete structure,
Definition~\ref{def:operators} introduces the modal/temporal operators of interest, which complement the standard logical connectives.
 
 \begin{definition}[Modal/temporal operators]\label{def:operators}
 Let $\X$ be a prefix-closed tree, let $\leq$ be the prefix relation between its paths and let $\pi$ a path in $\X$. 
 Then: 

\begin {itemize}[topsep=2pt,partopsep=1pt,itemsep=2pt,parsep=2pt]
\item $\pi\models_{\X}\Diamond^{u}\phi$ \emph{iff}
    $(\exists{\pi}^\prime\in\X)[\pi\leq{\pi}^\prime\wedge{\pi}^\prime\models_{\X}\phi]$. 
    In other words: ``there is a future of $\pi$ when $\phi$ becomes true''.

\item $\pi\models_{\X}\Box^{d}\phi$ \emph{iff} $(\forall{\pi^\prime}\in\X)[\pi\leq\pi^\prime\implies{\pi}^\prime\models_{\X}\phi]$.
In other words: ``in all the possible futures of $\pi$, $\phi$ is true''.

\item $\pi\models_{\X}\Diamond^{d}\phi$ \emph{iff}
    $(\exists{\pi}^\prime\in\X)[\pi^\prime\leq\pi\wedge{\pi}^\prime\models_{\X}\phi]$.
    In other words: ``there is a past of $\pi$ when $\phi$ was true''.

\item  $\pi\models_{\X}\Box^{u}\phi$ \emph{iff}
    $(\forall{\pi}^\prime\in\X)[\pi\leq\pi^\prime\implies{\pi}^\prime\models_{\X}\phi]$.
    In other words: ``for all pasts of $\pi$ $\phi$ was true''\footnote{Note that ``future'' and ``past'' correspond to a ``d''-index (for down) and to a ``u''-index (for up), respectively, for the $\Box $ operators, while it is the other way around for the $\Diamond$ operators.
    }.
\end{itemize}
 
 Using the strong partial order $<$ canonically associated with the weak partial order $\leq$, 
 we define the conditions for the relation $Succ(\pi^\prime,\pi)$ to hold as: 
 $\pi<\pi^\prime$ and there does not exist $\pi^{\prime\prime}$ such that
 $\pi < \pi^{\prime\prime} < \pi^\prime$. 
 The following  ensues:
  
  \begin{itemize}[topsep=2pt,partopsep=1pt,itemsep=2pt,parsep=2pt]
      \item $\pi\models_{\X}\mbox{NXT}_{\Diamond}\phi$ \emph{iff}
          $(\exists{\pi^\prime})[Succ(\pi^\prime,\pi)\wedge{\pi^\prime}\models_{\X}\phi]$, i.e.,
          ``there is an immediate future of $\pi$ when $\phi$ becomes true''.
    \item $\pi\models_{\X}\mbox{NXT}_{\Box}\phi$ \emph{iff} 
    $(\forall{\pi}^\prime)[Succ(\pi^\prime,\pi)\implies{\pi^\prime}\models_{\X}\phi]$,  i.e.,
         ``for all immediate futures of $\pi$, $\phi$ becomes true''.
 \end {itemize}
 
 Analogously, if we define that the relation $Pred(\pi^\prime,\pi)$ holds if $Succ(\pi,\pi^\prime)$ holds, we have the following:
 
 \begin{itemize}[topsep=2pt,partopsep=1pt,itemsep=2pt,parsep=2pt]
      \item $\pi\models_{\X}\mbox{PRV}_{\Diamond}\phi$ \emph{iff}
          $(\exists{\pi^\prime})[Pred(\pi^\prime,\pi)\wedge{\pi^\prime}\models_{\X}\phi]$, i.e.,
         ``there is an immediate past of $\pi$ when $\phi$ was true''.
    \item $\pi\models_{\X}\mbox{PRV}_{\Box}\phi$ \emph{iff} 
    $(\forall{\pi}^\prime)[Pred(\pi^\prime,\pi)\implies{\pi^\prime}\models_{\X}\phi]$,  i.e., ``for all immediate pasts of $\pi$ $\phi$ was true''
 \end {itemize}
  \end {definition}
  
  \begin{theorem}\label{th:temporalOperators}
  The temporal operators in Definition~\ref{def:operators} enjoy the following algebraic properties:
  
   \begin{enumerate}[topsep=2pt,partopsep=1pt,itemsep=2pt,parsep=2pt]
   \item They are all monotonic functions.
   \item Squares and diamonds form temporal doctrines (see~\cite{BGKL18}). In other words, they are left (right) adjoint (and also left (right) inverse) to inclusions of up or down completion: namely
   \begin{itemize}
   \item $\Diamond^{u}\phi$ corresponds to the minimal subobject containing $\llbracket{\phi}\rrbracket$ complete w.r.t. prefixes.
    \item $\Box^{u}\phi$ corresponds to the maximal subobject contained in $\llbracket{\phi}\rrbracket$ complete w.r.t. prefixes.
     \item $\Diamond^{d}\phi$ corresponds to the minimal subobject containing $\llbracket{\phi}\rrbracket$ complete w.r.t. prolongations.
      \item $\Box^{d}\phi$ corresponds to the maximal subobject contained in $\llbracket{\phi}\rrbracket$ complete w.r.t. prolongations.
     \end {itemize}
   \item Next and previous operators also form temporal doctrines: they correspond to left (right) adjoint to inclusions of next step or previous step completion.
   \end {enumerate}
  \end {theorem}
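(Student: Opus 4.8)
The plan is to reduce all three claims to statements about the interpretation map, viewing each operator as a monotone endofunction of the Boolean algebra $Subtree(\X)$ from Proposition~\ref{prop:Boole}. The first and decisive step is to rewrite the interpretations explicitly as set operations on $\llbracket\phi\rrbracket$. Unwinding Definition~\ref{def:operators}, $\llbracket\Diamond^{u}\phi\rrbracket$ is exactly the set of prefixes of paths in $\llbracket\phi\rrbracket$, i.e. the \emph{smallest} prefix-closed subtree containing $\llbracket\phi\rrbracket$; $\llbracket\Box^{u}\phi\rrbracket$ is the set of paths all of whose prefixes lie in $\llbracket\phi\rrbracket$, i.e. the \emph{largest} prefix-closed subtree contained in $\llbracket\phi\rrbracket$; dually, $\llbracket\Diamond^{d}\phi\rrbracket$ and $\llbracket\Box^{d}\phi\rrbracket$ are the smallest prolongation-closed subtree containing, resp. the largest one contained in, $\llbracket\phi\rrbracket$. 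This identification is the technical heart; once it is in place the rest is formal.

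With these descriptions, claim~(1) is immediate: enlarging $\llbracket\phi\rrbracket$ can only add witnesses for the existential (diamond) operators and can only relax the universally quantified membership requirement for the box operators, so each operator preserves inclusion. The same argument applies verbatim to the one-step operators, since they are defined by the identical quantifier patterns over $Succ$ in place of $\leq$. Hence all eight operators are monotone endofunctions of $Subtree(\X)$.

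For claim~(2) I would introduce the sub-poset $\mathcal{P}_{\mathrm{pre}}\subseteq Subtree(\X)$ of prefix-closed subtrees, with inclusion $\iota_{\mathrm{pre}}$. Because $\llbracket\Diamond^{u}\phi\rrbracket$ is the least prefix-closed subtree above $\llbracket\phi\rrbracket$, for every prefix-closed $T$ one has $\llbracket\Diamond^{u}\phi\rrbracket\subseteq T \iff \llbracket\phi\rrbracket\subseteq T$, which is precisely the assertion that $\Diamond^{u}$, corestricted to $\mathcal{P}_{\mathrm{pre}}$, is left adjoint to $\iota_{\mathrm{pre}}$ in the sense of Definition~\ref{def:adjoints}; symmetrically $T\subseteq\llbracket\Box^{u}\phi\rrbracket \iff T\subseteq\llbracket\phi\rrbracket$ for prefix-closed $T$ exhibits $\Box^{u}$ as right adjoint to $\iota_{\mathrm{pre}}$. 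Applying $\Diamond^{u}$ or $\Box^{u}$ to an already prefix-closed subtree returns it unchanged, which gives the left- and right-inverse properties. The pair $\Diamond^{d}\dashv\iota_{\mathrm{pro}}\dashv\Box^{d}$ over the sub-poset of prolongation-closed subtrees is the order dual. By Fact~\ref{fac:adjoints}, being one-sided adjoints these operators interact smoothly with the Boolean structure, which is exactly what being a temporal doctrine records.

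For claim~(3) the same template applies with the immediate-successor relation $Succ$ in place of $\leq$: I would establish the tense-style adjunctions $\mbox{NXT}_{\Diamond}\dashv\mbox{PRV}_{\Box}$ and $\mbox{PRV}_{\Diamond}\dashv\mbox{NXT}_{\Box}$ directly from Definition~\ref{def:adjoints}, and then read them off as adjunctions to the inclusions of the next-step and previous-step completions. The \textbf{main obstacle} lies here: unlike $\leq$, the relation $Succ$ is neither reflexive nor transitive, so the associated completion is not an idempotent closure operator and one cannot simply invoke a reflective/coreflective subcategory argument as in claim~(2); accordingly the inverse property is lost and only adjointness survives, matching the weaker wording of the theorem for these operators. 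I would therefore phrase the completions relative to a single application of $Succ$ and check carefully that boundary paths — roots, which have no predecessor, and maximal paths, which have no successor — do not break the biconditional defining the adjunction, using that $\X$ is prefix-closed so that an immediate predecessor, when it exists, is unique.
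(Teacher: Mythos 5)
Your proposal is correct and follows essentially the route the paper intends: the paper's own proof is a one-line deferral to analogy with~\cite{BGKL18}, and your argument spells out exactly the routine content of that analogy --- identifying $\llbracket\Diamond^{u}\phi\rrbracket$ and $\llbracket\Diamond^{d}\phi\rrbracket$ as the least prefix-closed (resp.\ prolongation-closed) subtrees containing $\llbracket\phi\rrbracket$, and $\llbracket\Box^{u}\phi\rrbracket$, $\llbracket\Box^{d}\phi\rrbracket$ as the corresponding greatest closed subtrees contained in it, from which monotonicity, the adjunctions to the inclusions in the sense of Definition~\ref{def:adjoints}, and the inverse property on already-closed subtrees all follow formally, with Fact~\ref{fac:adjoints} supplying the doctrine-style preservation consequences. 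Your two refinements are also sound and worth keeping: you silently repair the typo in the paper's clause for $\Box^{u}$ (which literally reads $\pi\leq\pi^{\prime}$ although the gloss ``for all pasts'' and item~2 of the theorem both require quantification over prefixes $\pi^{\prime}\leq\pi$), and your observation that $Succ$ is not a preorder, so that for the one-step operators only the tense adjunctions $\mbox{NXT}_{\Diamond}\dashv\mbox{PRV}_{\Box}$ and $\mbox{PRV}_{\Diamond}\dashv\mbox{NXT}_{\Box}$ survive while the inverse property is lost, correctly explains the deliberately weaker wording of item~3.
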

  \begin{proof}  The results can be obtained routinely in analogy with those in~\cite{BGKL18}.
    \end{proof}

     \begin{corollary}
     All of the properties of the operators in Definition~\ref{def:operators} (in particular their interrelations) 
     are completely defined by their being associated with adjoint functions, hence uniquely determined.
     \end{corollary}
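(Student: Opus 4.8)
The plan is to derive the corollary directly from Theorem~\ref{th:temporalOperators} together with the uniqueness of adjoints recorded in Fact~\ref{fac:adjoints}. First I would observe that Definition~\ref{def:operators} introduces each operator purely in terms of the prefix order $\le$ on paths and the derived relations $Succ$ and $Pred$; no data beyond the order structure of the tree $\X$ enters the definitions. Theorem~\ref{th:temporalOperators}(2)--(3) then re-expresses each of $\Diamond^u,\Box^u,\Diamond^d,\Box^d$ and of the next/previous operators as a one-sided adjoint to a canonically determined inclusion morphism --- the inclusion of the prefix-complete, prolongation-complete, next-step-complete, or previous-step-complete subobjects, respectively. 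Since these inclusion morphisms depend only on $\le$ (and on $Succ$/$Pred$), they are fixed once $\X$ is fixed.

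The central step is then to invoke Fact~\ref{fac:adjoints}(1): an adjoint to a given monotonic function, when it exists, is unique and is characterised by the adjunction property. Theorem~\ref{th:temporalOperators} supplies both existence and the identification of the base inclusion in each case, so uniqueness of adjoints forces each operator to coincide with the unique function satisfying its defining adjunction. This yields the ``uniquely determined'' half of the statement with no further computation.

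For the ``all interrelations'' half, I would argue that every algebraic law relating the operators is a formal consequence of the adjunction data, so that nothing extra-structural can be hidden. Concretely, Fact~\ref{fac:adjoints}(3) gives that a left adjoint preserves $\vee$ and $\perp$ while a right adjoint preserves $\wedge$ and $\implies$ --- the smooth-translation property --- which already accounts for how the modalities interact with the Boolean connectives; the dualities between $\Box$ and $\Diamond$ are exactly the reversal of the adjunction direction obtained by passing from up-completion to down-completion (and from $Succ$ to $Pred$); and Fact~\ref{fac:adjoints}(2) guarantees that composites of same-side adjoints are again adjoints, so iterated and compound modalities inherit adjointness automatically. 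Hence the unit/counit inequalities, the preservation laws, and the compositional identities all reduce to adjunction-theoretic facts and are determined by the adjoint structure alone.

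I expect the main obstacle to be conceptual rather than computational: it lies in making precise the universally quantified claim that \emph{all} properties and interrelations are captured by adjointness. The cleanest way to discharge this is to note that the operators carry no information beyond their defining adjunctions (Definition~\ref{def:operators} being phrased solely in order-theoretic terms), so any property is a statement about the adjoint pairs, and uniqueness of adjoints (Fact~\ref{fac:adjoints}(1)) then pins the operators --- and therefore every derived law --- unambiguously. The routine verification that each standard temporal law is \emph{derivable} from, rather than merely \emph{consistent} with, the adjunction framework can be deferred to the pattern already established in~\cite{BGKL18}.
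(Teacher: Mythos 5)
Your proposal is correct and follows exactly the route the paper intends: the corollary is stated without explicit proof as an immediate consequence of Theorem~\ref{th:temporalOperators} (each operator arises as a one-sided adjoint to a canonical completion inclusion) combined with Fact~\ref{fac:adjoints} (uniqueness of adjoints, preservation under composition, and the smooth-translation property). Your fleshed-out version, including deferring the routine derivations of the individual temporal laws to the pattern in~\cite{BGKL18}, matches the paper's own appeal to that reference.
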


 Now we are ready to show how to define a property in one tree also using the other one. 
 To this end, we look at the interpretation in $LT$ of the atomic formula corresponding to a property w.r.t. a given ${\cal{M}}^l_{\cal{C}}$.

\begin{itemize}[topsep=2pt,partopsep=1pt,itemsep=2pt,parsep=2pt]
\item $\sigma \in LT$ is \emph{bundle-complete} (noted $\sigma\models{bc}$) \emph{iff} $\sigma^\prime<\sigma\implies\neg(\nu^l(\sigma^\prime)=\nu^l(\sigma))$.
\item $\sigma$ is \emph{contract-safe} if it satisfies $\Diamond^{u}{bc}$. 
\end{itemize}

The first condition means that $\sigma$ reaches a stable state of the automaton, 
while the second one means that $\sigma$ does not contain any spurious transfer.

\section{The associated language at work}\label{sec:language}
The definition of the logical (modal/temporal) operators in Section~\ref{sec:categoricalModel} immediately lends itself to the definition of a query language, where 
a query is formulated in terms of paths in a suitable tree, satisfying a given formula.
To start with, the 
language contains two atomic sentences, presented in Definition~\ref{def:atomicSentences}, to be interpreted in the structure $LT$.  
%
%
%
%

\begin{definition}[Atomic sentences]\label{def:atomicSentences}
Let $LT$ be the tree corresponding to the 
%
%
evolution $\E_0$ for a given ledger built on an alphabet of transfers $T$. 
Then the following atomic sentences are defined by producing the respective interpretations.
\begin{itemize}[topsep=2pt,partopsep=1pt,itemsep=2pt,parsep=2pt]
\item For any ledger state $\sigma$, we consider the property $\leq_t$, which is satisfied if $|\sigma|\le{t}$.
In $LT$ this corresponds to identifying the subtree $\chi_t=\llbracket\leq_t\rrbracket_{LT}$ composed of all the 
sequences in $LT$ which are not longer than $t$.

\item For a transfer $\theta\in T$, the operator $app_{\theta}$ is interpreted in $LT$ as the set of sequences 
in which $\theta$ appears\footnote{Actually, the paths in $LT$ contain encodings of transfers, but for the sake of simplicity in the rest of the section we identify the two notions.}, 
i.e., it is the subtree $\llbracket{app}_{\theta}\rrbracket_{LT}$. 
\item For a transfer $\theta\in T$, 
the operator $app_{\theta,n}$ is interpreted in $LT$ as the set of sequences in which $\theta$ appears in position $n$, 
i.e., as the subtree $\llbracket{app}_{\theta,n}\rrbracket_{LT}$. 
\end{itemize} 
\end{definition}

The following facts test the expressivity of the language built with the atomic sentences of Definition~\ref{def:atomicSentences} and the operators of Definition~\ref{def:operators}.

\begin{fact}\label{th:subobjects}
Given a set of ledger states $LT$ for a given ledger and a contract $\cal{C}$, we can define the following subtrees as interpretations of the respective formulas (the first one refers to the ledger, the other ones to a contract):

\begin{enumerate}[topsep=2pt,partopsep=1pt,itemsep=2pt,parsep=2pt]
\item\label{it:sub1} The interpretation in ${LT}$ of $appLst_{\theta,t}=\bigvee_{1\leq{n}\leq{t}}(app_{\theta,n}\wedge\chi_n)$ is given by the tree of ledger states $\sigma$, of length at most $t$, with $\theta$ as last transfer in $\sigma$. 

\item\label{it:sub2} Given a bundle initial trajectory-labeling $z_B$ in ${\mathbf{B}}^{\beta,in}_\C$,
let $\zeta_B$ be the formula interpreted into the corresponding singleton.
Then the set of bundle initial trajectory-labelings in $\mathbf{B}^{\beta,in}_\C$
such that they start with $z_B$, i.e. the set of prolongations of $z_B$,
is the interpretation in $\mathbf{B}^{\beta,in}_\C$ of 
%
%
$\Diamond^d{\zeta}_B$.

\item\label{it:sub3} Given a transfer initial trajectory-labeling $z_L$ in ${\mathbf{L}}^{\rho,in}_\C$, 
let $\zeta_L$ be the formula interpreted into the corresponding singleton.
Then the set of transfer initial trajectory-labelings in ${\mathbf{L}}^{\rho,in}_\C$
such that they start with $z_L$, i.e. the set of prolongations of $z_L$, is the interpretation in ${\mathbf{L}}^{\rho,in}_\C$ of 
%
%
$\Diamond^d{\zeta}_L$.

\item\label{it:sub4} Given the translation function ${\nu^l_{\C}}^{-1}$ (resp. ${\nu^e_{\C}}^{-1}$) and $z_B$ a labeling in $\mathbf{B}^{\beta,in}_\C$ 
(resp. $z_L$ a labeling  in $\mathbf{L}^{\rho,in}_\C$), the set of ledger states
where a prefix of $z_B$ (resp. $z_L$) has been performed reaching a ``legal'' state in ${\cal{M}}^l_{\C}$ 
(resp. a possibly ``intermediate'' state in ${\cal{M}}^e_{\C}$), is the interpretation in $\E_0$ of 
${\nu^l_{\C}}^{-1}(\Diamond^d(\zeta_B))$ (resp. of ${\nu^e_{\C}}^{-1}(\Diamond^d (\zeta_L))$).

\item\label{it:sub5} The set of ledger states in $LT$, noted $hol_{(r,k,t)}$, such that, as a result of the transfers in this sequence, an agent $k$ holds a resource $r$ at time $t$ (w.r.t. a contract $\C$), is the interpretation of the formula 
$\Diamond^d(\bigvee_{k^\prime}\bigvee_{t^\prime\leq{t}}appLst_{(r,k^\prime,k),t^\prime} \wedge\Box^d(\neg\bigvee_{k^{\prime\prime}}\bigvee_{t^\prime\leq{t}^{\prime\prime}\le{t}}appLst_{(r,k,k^{\prime\prime}),t^{\prime\prime}}))$. (That is to say: at time $t$, the token $r$ is with the agent $k$, since at some previous time $t^\prime$ it was transfered to $k$ from $k^\prime$, and in no prolongation of the state $t^\prime$, namely at $t^{\prime\prime}$ which is a prefix of the state at $t$, a transfer towards some actor $k^{\prime\prime}$ has been recorded.)
\end{enumerate} 
\end{fact}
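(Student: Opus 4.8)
(Sketch)
The plan is to read each item as the computation of an interpretation: I unfold the semantic clauses of Definitions~\ref{def:atomicSentences} and~\ref{def:operators} together with the boolean structure of $Subtree(\cdot)$ supplied by Proposition~\ref{prop:Boole}, and then check that the subtree cut out by the formula coincides, as a set of paths, with the family described in words. Since $\wedge,\vee,\neg$ are interpreted by intersection, union and complement, and the tree order is exactly the prefix order, every item reduces to an elementary set equality.

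First I would dispatch item~\ref{it:sub1}. By Definition~\ref{def:atomicSentences}, $\llbracket app_{\theta,n}\rrbracket_{LT}$ is the set of states having $\theta$ in position $n$, while $\chi_n=\llbracket\leq_n\rrbracket_{LT}$ is the set of states of length at most $n$; since a state can occupy position $n$ only when its length is at least $n$, the meet $\llbracket app_{\theta,n}\wedge\chi_n\rrbracket_{LT}$ is exactly the set of states of length precisely $n$ ending with $\theta$. Joining over $1\leq n\leq t$ gives the states of length between $1$ and $t$ whose last transfer is $\theta$, as claimed. Items~\ref{it:sub2} and~\ref{it:sub3} are the same one-line computation carried out in $\mathbf{B}^{\beta,in}_\C$ and in $\mathbf{L}^{\rho,in}_\C$: as $\zeta_B$ is interpreted by the singleton $\{z_B\}$, the clause for $\Diamond^d$ yields $\pi\models\Diamond^d\zeta_B$ iff there is $\pi'\leq\pi$ with $\pi'=z_B$, i.e.\ iff $z_B\leq\pi$, so that $\llbracket\Diamond^d\zeta_B\rrbracket$ is precisely the set of prolongations of $z_B$, and symmetrically for $z_L$.

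Next I would settle item~\ref{it:sub4} by combining the previous step with Theorem~\ref{th:baseChange}, which furnishes the substitution ${\nu^l_\C}^{-1}:Subtree(\mathbf{B}^{\beta,in}_\C)\rightarrow Subtree(LT)$ acting by ${\nu^l_\C}^{-1}(Y')=\{\sigma\in LT\mid\nu^l_\C(\sigma)\in Y'\}$. Taking $Y'=\llbracket\Diamond^d\zeta_B\rrbracket$ and invoking item~\ref{it:sub2}, the pullback equals $\{\sigma\mid z_B\leq\nu^l_\C(\sigma)\}$. It then remains to recall, from Definition~\ref{def:occurringFunction} and the characterisation of $\nu^l_\C$ obtained in the proof of Theorem~\ref{th:monotonic}, that $\nu^l_\C(\sigma)$ is the maximal bundle trajectory-labeling realised by the useful part of $\sigma$; hence $z_B\leq\nu^l_\C(\sigma)$ holds exactly when that useful part has completed the bundles of $z_B$ and so reaches the legal state at the tip of $z_B$ in ${\cal{M}}^l_\C$. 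The transfer-level version, with ${\nu^e_\C}^{-1}$ and intermediate states of ${\cal{M}}^e_\C$, is identical using item~\ref{it:sub3}.

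The bulk of the work, and the main obstacle, is item~\ref{it:sub5}, whose nested temporal operators must be peeled off with care. Reading $hol_{(r,k,t)}$ from the outside, the leading $\Diamond^d$ asserts the existence of a prefix $\sigma'\leq\sigma$ satisfying the inner conjunction; its first conjunct, through item~\ref{it:sub1} instantiated at the transfers $(r,k',k)$, forces $\sigma'$ to end at some time $t'=|\sigma'|\leq t$ with a transfer delivering $r$ to $k$, while its second conjunct $\Box^d(\neg(\cdots))$ forces every prolongation of $\sigma'$ to carry no transfer $(r,k,k'')$ in final position within the window $t'\leq t''\leq t$. Two points require attention. First, the scoping: the variable $t'$ is bound by the outer $\bigvee_{t'\leq t}$ and shared with the window $t'\leq t''\leq t$, and the upper cutoff $t$ is enforced not by $\Box^d$ itself but by the $\chi$-factors hidden inside each $appLst$ (so that any prolongation longer than $t$ trivially satisfies $\neg(\cdots)$). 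Second, forbidding every outgoing transfer of $r$ from being a \emph{last} transfer across all these prolongations is equivalent to its never occurring in the window, because such a transfer, whenever recorded, is the last step of the corresponding prefix. Assembling these observations shows that the interpretation is exactly the set of states in which the most recent transfer of $r$ brought it to $k$ and none has since carried it away, i.e.\ the states where $k$ holds $r$ at time $t$, matching the gloss following the statement.
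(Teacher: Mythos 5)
The paper asserts Fact~\ref{th:subobjects} without any proof, so there is no official argument to compare against; your proposal must be judged on its own merits. For items~\ref{it:sub1}--\ref{it:sub4} it passes: the computation of $\llbracket app_{\theta,n}\wedge\chi_n\rrbracket_{LT}$ as the states of length exactly $n$ ending with $\theta$, the identification of $\llbracket\Diamond^d\zeta_B\rrbracket$ with the set $\{\pi\mid z_B\le\pi\}$ via the clause for $\Diamond^d$ in Definition~\ref{def:operators}, and the pullback along ${\nu^l_\C}^{-1}$ supplied by Theorem~\ref{th:baseChange} are exactly what the definitions yield. Your reading of item~\ref{it:sub4} as $\{\sigma\mid z_B\le\nu^l_\C(\sigma)\}$ is also the right resolution of the Fact's loose wording (``a prefix of $z_B$ has been performed''), since it is the only reading consistent with items~\ref{it:sub2} and~\ref{it:sub3}.

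Item~\ref{it:sub5}, however, contains a genuine gap at precisely the step you flag as your ``second point''. The implication ``occurs in the window $\Rightarrow$ some prolongation ends with it'' is fine, but the converse you need --- no occurrence along $\sigma$ implies the box holds --- fails, because $\pi\models_{\X}\Box^d\phi$ quantifies over \emph{all} prolongations of the witnessing prefix $\sigma'$ in the tree of interpretation, and here $\X=LT$ contains every resource-safe continuation, not only prefixes of the tested state $\sigma$. Concretely: if $\sigma'$ ends with $(r,k',k)$ and $|\sigma'|<t$, then $\sigma''=\sigma'\cdot\langle(r,k,k'')\rangle$ is resource-safe, hence lies in $LT$, has length at most $t$, and ends with an outgoing transfer of $r$ from $k$, so the box fails at $\sigma'$. (Note also that $\bigvee_{t'\le t''\le t}appLst_{\theta,t''}$ is equivalent to $appLst_{\theta,t}$, since $appLst_{\theta,t''}$ is monotone in $t''$; the lower window bound whose scoping you discuss is semantically vacuous.) Consequently, interpreted in $LT$ the formula collapses to the set of states possessing a prefix of length exactly $t$ whose last transfer is some $(r,k',k)$ --- strictly smaller than the intended $hol_{(r,k,t)}$: if $r$ was delivered to $k$ at time $3$ and untouched since, then $k$ holds $r$ at $t=5$, yet the formula is falsified by the hypothetical branch above. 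Your final assembly therefore silently assumes that the prolongations seen by $\Box^d$ are prefixes of the actual ledger history; to make the equivalence valid you must interpret the formula natively inside an instantaneous evolution $\E_s$ with $s\ge t$, whose established initial chain makes every path of length at most $t$ a prefix of the recorded history (this is what the paper's own gloss ``at $t''$ which is a prefix of the state at $t$'' tacitly presupposes). Be aware that conjoining $\Phi_s$ as in Definition~\ref{def:viewSatisfiability} does not accomplish this, since it merely intersects the $\E_0$-interpretation with $\E_s$ without re-scoping the modal quantifiers; the relativization must be stated explicitly, otherwise the concluding step fails.
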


We can now express in the language (and, possibly, use as axioms in deductions) the properties required in Section~\ref{sec:contracts}, and assume that the following formulas are always verified when interpreted in $\mathbf{B}^{\beta,in}_\C$:

\begin{itemize}
\item $\bigwedge_r\bigwedge_k\neg{app}_{(r,k,k)}$, meaning: ``no transfer of the form $(r,k,k)$ occurs''; 
\item $\bigwedge_r\bigwedge_k\neg{app}_{(r,\bot,k)}$ meaning: ``no transfer of the form $(r,\bot,k)$ occurs'';
\item $\bigwedge_r\bigwedge_k (app_{(r,k,\bot)}\implies\Box^d\bigwedge_{k^\prime}\bigwedge_{k^{\prime\prime}}\neg{app}_{(r,k^\prime,k^{\prime\prime})})$, meaning:
``once $(r,k,\bot)$ has occurred, then no transfer for $r$ can appear in any future''.
\end{itemize}
Similarly, we can assume that the formula $\Phi_{t+1}\implies\Phi_{t}$
is always verified when interpreted in $\E_0$, thus formalising as an axiom of the logic 
the fundamental property of \emph{immutability} of ledgers. 
Indeed, the formula expresses 
that each evolution at time $t+1$ is contained in the evolution at time $t$, thus formalising the fact that
``once a ledger state is reached, it is reached forever''.

\subsection{Formalising queries}
We now give some examples of queries we can ask a ledger w.r.t. a contract. Let us assume that, when we write $\sigma_i$ for a ledger state, we intend $|\sigma_i|=i$, i.e. $\sigma_i$ satisfies $\chi_i \wedge \neg \bigvee_{0\leq k \leq i-1} \chi_k$. Hence $\sigma_i$ satisfies $\phi$ will mean that this fact happens at the instant $i$. 
Recall that an evolution $\E_j$ has only one state $\sigma_i$ for $i\leq j$ (a past, established state) and only prolongations of $\sigma_j$ as future states.

\begin{enumerate}
	\item  Suppose we start recording on a ledger the transfers relative to contract $\C_d$ of Example~\ref{ex:contract} performed according its associated contract execution automaton ${\cal{M}}^e_\C$ (with the useful part reaching a state in ${\cal{M}}^l_\C$).
	The encodings of transfers can be registered on the ledger at different (not necessarily consecutive) instants of time.
	We can then ask ``which is the state of the legal contract (resp. contract execution) automaton for a state of the ledger $\sigma_8$, in an evolution $\E_n$, with $8\le{n}$?''  
	The answer is given by evaluating the formulas
	$\exists_{\nu^e}(\Phi_n\wedge\chi_8)$, 
	with respect to	the contract execution automaton, and  
	$\exists_{\nu^l}(\Phi_n\wedge\chi_8)$, with respect to the contract legal automaton.
	
	\item A simple historical query made possible by our approach is the following: 
	given an interval $[i\dots{j}]$ of time instants, and fixed an evolution $\E_s$, $j\le{s}$, 
	we can ask whether a formula $\phi$ is true for each ledger state $\sigma_k$ 
	with $k\in[i\dots{j}]$ in the evolution $\E_s$.
	A typical example for $\phi$ is:
	$hol_{(\mbox{\emph{Eiffel}},a,all(X[i,j]))}$, meaning that 
	``In the interval between $i$ and $j$, Alice (represented as $a$) owns the non-fungible token certifying the property of Eiffel tower''.
	Note that this notation seems to use a new universal quantification, but it is actually syntactic sugar: the query can be rolled out into a finite conjunction of queries the form 
	$hol_{(\mbox{\emph{Eiffel}},a,i)}\wedge\dots\wedge{hol}_{(\mbox{\emph{Eiffel}},a,j)}$, as we have made in other cases. 
	Hence, the query is answered by the satisfiability of $\phi\wedge\Phi_s$.
	\item 
	Many variations on this theme are possible. For example: given $\sigma_i$, 
 we could verify whether, in the future of it there will be a state $\sigma_j$ verifying $\phi$, provided that, in its past the series of transfers $\theta_{1},..., \theta_{k}$, at times $j_1\leq...\leq j_k$,  has been recorded.
 This is equivalent to verifying the formula: 
 $\Diamond^u(\Diamond^d(\bigwedge_{1\leq{n}\leq{k}}app_{(\theta_n,j_n)})\implies\phi)$.
	
	\item A further variation is a form of hypothetical reasoning: given an evolution $\E_i$ and a formula $\phi$ which is not true in $\sigma_i \in \E_i$, is there a prefix $\sigma_k$ of $\sigma_i$ 
	 such that $\phi$ could be true in some future of $\sigma_k$?
	In this case, we are dealing with a ledger state $\sigma_i$ in the interpretation of $\neg\phi\wedge\Diamond^d\Diamond^u\phi \wedge \Phi_i$.
	
	\item  A simple, but interesting, type of query is about the possibility of verifying, at given instant $i$, the presence of sequences of transfers registered on the ledger. 
	Without loss of generality, we consider sequences composed of two transfers: $\langle\theta_1\theta_2\rangle$.
	Then the occurrence of such a sequence on a  state 
	%
	%
	is expressible by the formula:
	$\Diamond^d(app_{(\theta_2,n)}\wedge \Diamond^d(app_{(\theta_1,n^\prime)}))$. 
	Queries of this kind can be structured in more complex ones like
	$
	\Diamond^d(app_{(\theta_2,n)}\wedge\Diamond^d(app_{(\theta_1,n^\prime)}))\wedge
	\Diamond^u(\Diamond^d(app_{(\theta_2,m)})\wedge\Diamond^d(app_{(\theta_1,m^\prime)})) 
	$, 
	looking for repetitions of the same sequence later in the time.
	One could weaken the condition that the transfers occurring in the two sequences are ordinately equal, by imposing only a certain kind of similarity: e.g., the same actors, or the same resource, etc.
\end{enumerate}

Some remarks are in order.

Technically, the construction of the interpretation of the formula in Item 3 is equivalent to a ``temporal projection via regression'': transfers are not executed, but consequences of their hypothetical execution are verified.
In practice, temporal regression queries must be handled adequately to ensure that meaningful answers return and unnecessary computational burdens are avoided. 
In fact, in themselves they are subject to infinite answers: think of the case in which, starting from a state of affairs in which Bob is in possession of a token that certifies his ownership of the Colosseum and Alice is in possession of a token that certifies her ownership of the Eiffel tower, we want to verify the possibility of switching ownerships, with Bob and Alice taking possession, respectively, of the French Belle Epoque building and of the Roman era stadium. 
One way for this to happen is by exchanging tokens, if contracts have been defined that enable such exchanges; but this may happen an arbitrary number $n$ of times, by sending them back and forth for cases greater than 1, while one is presumably interested solely in the case $n = 1$. 
In a concrete ledger management system, managing queries of this type
%
%
could be delegated to the user, by allowing her to place limits on the time $j_k$ when the hypothesized situation occurs in the future, or to the system, by constraining it to return only the shortest answer.
	
Queries of the type discussed in Item 5 are particularly useful and interesting because they can act as tools for auditing and process mining on the activities of the ledger. For example, they could be used to verify the repetition over time of the sending of quantities of money from Bob to Alice and then from Alice to George, which could be indicative of a laundering activity. Or they could identify the repetition, for instance, of supplies of kitchen plinths from company A to company B, which in turn uses them to assemble kitchen furniture which then routinely supplies to company C that produces turnkey kitchens, fully furnished and equipped with appliances; these repetitions may suggest connecting the three companies in a single supply chain through optimized revenue sharing  contracts as illustrated in \cite{BGMP+20,BGMP+21}.


\section {Discussion and related work}\label{sec:related}
Not long after its inception, database technology produced several temporal data models~\cite{Sno99,Sno95,ACGJ+93}. 
However, these models apply to data, like bank accounts and document versions, produced in application contexts where time is treated as an add-on to basic data models bereft of a time dimension. 
By contrast, in ledgers
time is not an option, but stands out as a fundamental, albeit implicit, aspect in the overall functioning of the system. 
Hence, our approach differs substantially from models such as those mentioned above precisely because we treat time as a universal factor independent of specific data types, starting from the assumption that all data in a ledger is set in a time dimension.

Much more recently,the works in~\cite{BRDD18} and~\cite{YCG19} describe systems that use relational database technologies to extract information from the Ethereum and Bitcoin blockchains by leveraging the available meta-information, such as hash numbers of blocks and transactions.
Thus, while having significant practical usability, they are very coarse-grained and very specific to the environments they are meant for, hence they do not provide for generalizable data models with a built-in management of time.

Quite closer to our approach is the treatment of time in the Situation Calculus~\cite{CH81,Rei01}, a formal framework that provides a general theory of action by viewing the world as a succession of situations in consequence of the execution of admissible actions. 
It has in this sense influenced our elaboration of similar formal and computational tools to support querying in ledgers.
On the other hand, the primary applications of Situation Calculus are in the areas of knowledge representation for planning by artificial agents, thus requiring a rich vocabulary made up of fluent predicates describing various aspects of the world relevant for the agents’ actions.  This results into “frame problems” and “ramifications problems” in the update of such representations that need to be tackled whenever even seemingly minimal changes in the state of affairs take place. By contrast, our lean representation structure links actors and resources through a single graph, thus staying at large from laborious update processes.  
At a logical level, the Situation Calculus can be reconstructed through modal/temporal logics~\cite{Ben11,Lak10} bearing a number of resemblances with, and sharing characteristics of, the logics used here to provide the basis to query past, present and future state of affairs of the 
%
%
ledger.
In particular, the modal/temporal logic versions of the Situation Calculus share with the approach we have presented here the management of state of affairs as “possible worlds” rather than as reified entities referred to through second-order variables as in the original version of the Situation Calculus, with considerable simplifications in the technical treatment of the underlying semantics. 
Temporal and dynamic logics are used in~\cite{Hal17,MGOS19,BDS20} to address dynamic aspects of the blockchain, orthogonal to those discussed here, such as, respectively, the guarantees given by the validation protocols to agents involved in contractual transactions and the management of provisional blocks.

The relationships between legal contracts, smart contracts, transactions and blockchains are wide and varied and unfold from a multiplicity of applications and of disciplinary contexts. 
Long before blockchains became commonly used technologies, and quite independently of them, the concept of smart contract was introduced by Nick Szabo in the 1990s~\cite{Sza97}. 
The focus and fundamental motivation for this conceptualization stems from the observation made by Szabo that the contracts of the legal tradition, especially those pertinent to the commercial sphere, establish that transactions, in the commercial sense of the term, must be carried out in conjunction with given events, such as the payment deadline for a rent, mortgage or leasing contract. 
In the treatment given by Szabo, a contract can therefore be seen as the automation of the transactional part of a legal contract, with the guarantee that the transactions will be effectively carried out, as well as the ability to automatically manage infringements of the agreed conditions without having to resort straight away to legal proceedings - for example by automatically sending a warning letter in the case of an overdue payment, or by electronically locking access to rented premises in case of accumulation of non-payments. With this conceptualization of smart contracts, Szabo aimed to increase the efficiency in the execution of the underlying legal agreements, as well as to decrease legal disputes arising from any infringements. 
However, time was not yet ripe for the concept to pass into practice and implementation, and it will take more than two decades for Szabo's pioneering contribution to gain light and consideration.

In the meantime, the idea was born of transferring the technology of advanced DBMS transactional models beyond the fences of the single company, so as to leverage them to support business eco-systems made possible by the concurrent advent of technological trends such as Web services, service-oriented architectures and, generally, of distributed systems hinging on cross-entity computer interactions. 
Early attempts to adapt the most advanced of such models, based on the properties of Atomicity, Consistency, Isolation and Durability (ACID), to implement complex e-commerce transactions spanning multiple businesses are described in~\cite{AFP96,APP97,APPP98}. 
However, substantial difficulties faced by such models were rooted in the fact that the short-lived nature of ACID transactions as conceived for single corporate domains did not fit well with the longer-life requirements of multi-business transactions. 
Alternatives to the timed-out resource-locking of traditional ACID models were thus introduced, in particular compensating transactions~\cite{KS03}, although these too proved insufficient to make the envisaged business model fly, in the absence of a trustable infrastructure for cross-enterprise transaction execution. 
Meanwhile, a transaction model such as BASE 
(Basic Availability, Soft-state, Eventual consistency)\footnote{www.dataversity.net/acid-vs-base-the-shifting-ph-of-database-transaction-processing}, characterized by weaker constraints with respect to ACID, was successful in the practice of distributed applications such as social networks and e-commerce stores, that can be handled without taking care of demanding contractual requirements.

But the coming of age of blockchains showed that the last word had yet to be said about the feasibility of bringing together the world of commercial contracts and of digital transactions. Indeed, second-generation blockchains like Ethereum revived and brought all the previous attempts together, by leveraging the digital trust afforded by blockchain technology to resurrect and implement Szabo's proposal.  
Yet, here too there are a number of weaknesses to address, as smart contracts implemented in Ethereum and other 
blockchains~\cite{GDAD+20} bring together basic transactions but are not inherently transactional, in the sense of being bound by a well-defined and verifiable set of properties that define the effects of their execution on the database, while they are instead treated as programming constructs in dedicated or conventional programming languages. 
Specialised languages and frameworks for smart contracts include Solidity on Ethereum\footnote{https://docs.soliditylang.org/en/latest/\#,https://ethereum.org/en/} and private distributed ledgers maintaining records of all the performed (atomic) transactions like Fabric from the Hyperledger software ecosystem \footnote{https://www.hyperledger.org/use/fabric, https://www.hyperledger.org}. 
As these languages and frameworks usually lack a proper formal foundation, but may, on the other hand, be Turing-complete~\cite{ZXDC+20}, problems arise with respect to verification, both of their behavioural properties and of the conformity between a contract's textual definition and its programmatical implementation. 
This has led to a number of problems like the already mentioned notorious concurrency bug that affected the DAO smart contract through which Ethereum itself was funded~\cite{DP17}. 
By contrast, our formalization of a contract as an automaton forces a rigorous division of its various execution phases, so as to provide a declarative model transparent to aspects of access to resources. 
 Furthermore, the ability, intrinsic to our formalization, to impose interdependencies between the various transactions that make up a smart contract substantially reproduces the atomic characteristics of the ACID paradigm, without however binding them to the release and roll-back of resources after time-out, which would disagree with contexts of use populated by actors that can be flexible on their times to respond to requests. 

At the same time, our model is sufficiently general to be independent from any implementation infrastructure 
(also the original formulation by Szabo abstracted from specific execution models), 
and yet compatible with various deployments on ledgers, distributed ledgers and blockchains. 
Finally, it could also be synthesized from optimization procedures as in the case of the ``intelligent smart contracts'' for innovative supply chain management illustrated in~\cite{BGMP+20,BGMP+21} with implementation, respectively, in  a public blockchain (Ethereum) and a private distributed ledger (Hyperledger Fabric).

Another aspect for which the implementation of smart contracts as programs is unsatisfactory is that the relationship with legal contracts is substantially weakened, if not lost altogether, a relationship whose reconstruction is a specific objective of this work. 
In this sense, there are parallels and convergence with other efforts in a similar direction. 
The work in~\cite{MMN17} first recognized that to make a legal contract computer-executable requires the availability of suitable formal models for specifying its intended (legal) semantics. 
As concerns this latter point, two proposals, both named Contract Automata and based on similar principles, are contained in~\cite{APSS16} and~\cite{BDF16}.

In~\cite{APSS16}, states of the automaton correspond to sets of \emph{permissions} and \emph{obligations} to execute some actions. 
A state transition occurs when a permitted action in a state is performed, giving rise to a new configuration of permissions/obligations.
The main goal is to verify the soundness of the contract, for example to check that any action for which there is an obligation in a given state is also permitted in that state. 
The treatment is restricted to two-party contracts, so that a state is annotated with a four-tuple of sets of deontic clauses, describing, for each party, what its permitted actions and its obligations are.
No explicit consideration is given to the resources needed to perform the action, as the assumption is that if a part $p$ has permission to perform an action $a$, $p$ will succeed in performing $a$, if it tries.
The authors suggest that the possibility of checking the actual feasibility of $a$ could be modeled by introducing a different action $attempt\_a$ which then becomes the object of the permission.
This extension could open the way for an explicit discussion of the conditions, i.e., of the resources needed, for an attempt to be successful.

The model in~\cite{BDF16} considers multi-party contracts, where each participant is modeled by an automaton, and a contract describes the conditions under which transitions in the different automata can be synchronised.
Here, each party \emph{offers} or \emph{requests} the execution of some action, and synchronisation is achieved when requests by one party are matched by offers from another one. 
In this case, offers and requests may concern resources produced by a participant and consumed by another one.
Synchronisation can also be weakened, provided that requests and offers are matched in the end, thus introducing a form of credit, to be honoured before the end of the contract. 
Various logics for reasoning on contracts are then derived and compared.

Finite-state machines are also at the basis of VeriSolid~\cite{ML18,MLSD19}, a tool to specify contracts to be deployed on Ethereum. 
The generated contracts are secure by design, as they do not allow some sequences of actions (invocations of a Solidity function), thus reducing the expressiveness of Solidity. 
Plugins can impose further constraints,
%
%
for instance requiring that functions are invoked according to the linear order provided by some shared program counter.

These models do not consider resources in the definition of state, and, lacking a notion of transaction as set of actions, transitions occur on single actions. 

A connection between the notion of contract (actually, the almost equivalent notion of \emph{policy}) and the notion of resource has been drawn in~\cite{BFHP17}. 
There, policies are defined in terms of admissible paths in the space state of some actor subject to the policy. 
State transitions have to be synchronised with corresponding transitions in the state of some resource necessary to realise the transition (where the resource state typically describes the availability of the resource for some actor). 
This is achieved by annotating states in a policy with resources, and expressing synchronisation through constraints on valid annotations. 
The model of allocations of resources to actors adopted in this paper can be seen as equivalent to the form of annotation adopted there, especially since allocation systems fully support the notion of state for both resources and actors.

\section{Conclusions}\label{sec:concls}
The approach to reasoning on evolutions of ledgers introduced in this paper starts from 
a formalisation of actions required by a contract in terms of transfers of tokens representing specific resources between specific actors; 
%
%
the recording of these transfers on ledgers 
%
marks the advancements in the contract's execution.
This provides the basis for the construction of computational solutions that appear advantageous and appropriate for the evolution of the technological paradigm to which these constructs are associated.  
These solutions have been expressed at a level of maximum generality and abstraction so as not to place restrictions on their transferability to a wide variety of concrete implementation contexts.  
Specifically, we have shown that the resulting notion of ledger state is compatible with query mechanisms that can be translated into a modal/temporal logic where it is possible to query the present, the future and the past of the ledger, 
as well as its ``present perfect'' and ``nearest future'', and also to reason counterfactually with respect to what would have happened if the course of things, that is, of transactions, had been different.  
Maximum flexibility is thus guaranteed in the auditing of activities on the ledger.  
%
%
Moreover, the formal characterization of contracts as automata
%
%
allow us to 
treat them as advanced transactional models,  rather than as programs, as is the case in the current practice of smart contract implementations.  
This model reproduces some of the properties of the  well-established  ACID  model  in  the context of ledgers and contracts, such as consistency, atomicity in the sense of interdependence between transactions, declarativeness in the access to resources used for the completion of contractual obligations.  
It therefore generalizes, at an abstract and formal level, indications and proposals for an effectively transactional treatment of smart contracts, which is desirable in order to avoid programming errors that have plagued their practice in the past, sometimes with disastrous financial consequence, and to boost effectiveness in contract execution.  
The two contributions come together, as the query logic can be used to audit the progress and actual execution of contracts built on the basis of these formal criteria.

\section*{Acknowledgements}
Work partially supported by Sapienza, project ``Consistency problems in distributed and concurrent systems''. 
We thank the anonymous referees for indication on how to improve this paper.

\bibliography{biblio}
%

\end{document}